\newcommand\Hwin{H_{\text{win}}}
\newtheorem{claim}[]{Claim}
\newcommand{\cA}{\mathcal{A}}
\def\ra{\rightarrow}
\newcommand{\NP}{\textsf{NP}}
\newcommand{\PTIME}{{\textsf{PTIME}}}
\newcommand{\coNP}{{\textsf{co-NP}}}
\newcommand{\PSPACE}{{\textsf{PSPACE}}}
\newcommand{\EXPT}{{\textsf{EXPTIME}}}
\newcommand{\FOR}{{\textsf{Th($\mathbb{R}$)}}}
\newcommand{\EFOR}{{\textsf{$\exists$-Th($\mathbb{R}$)}}}
\begin{document}
\title[Distribution-based objectives for MDPs]{Distribution-based objectives for 
Markov Decision Processes} 

\author{S. Akshay}
\affiliation{
  \department{Dept of CSE}
  \institution{Indian Institute of Technology Bombay}
  \country{India}
\thanks{This work was partially supported by ANR projects STOCH-MC (ANR-13-BS02-0011-01), DST/CEFIPRA/Inria Associated team EQUAVE, DST/INSPIRE Faculty Award [IFA12-MA-17], Akamai Presidential Fellowship and NSF CAREER award CCF-1552651.}}
\email{akshayss@cse.iitb.ac.in}

\author{Blaise Genest}
\affiliation{
  \department{Univ Rennes}
  \institution{CNRS, IRISA}
  \country{France}}
\email{blaise.genest@irisa.fr}

\author{Nikhil Vyas}
\affiliation{
  \department{EECS}
  \institution{MIT}
  \country{USA}}
  \email{nikhilv@mit.edu}

\begin{abstract}
We consider distribution-based objectives for Markov Decision Processes (MDP). This class of objectives gives rise to an interesting trade-off between full and partial information. As in full observation, the strategy in the MDP can depend on the state of the system, but similar to partial information, the strategy needs to account for all the states at the same time.

In this paper, we focus on two safety problems that arise naturally in this context, namely, existential and universal safety. Given 
an MDP $\cA$ and a closed and convex polytope $H$ of probability distributions over the states of $\cA$, the existential safety problem asks whether there exists some distribution $\Delta$ in $H$ and a strategy of $\cA$, such that starting from $\Delta$ and repeatedly applying this strategy keeps the distribution forever in $H$. The universal safety problem asks whether for all distributions in $H$, there exists such a strategy of $\cA$ which keeps the distribution forever in $H$. We prove that both problems are decidable, with tight complexity bounds: we show that existential safety is \PTIME-complete, while universal safety is \coNP-complete.

Further, we compare these results with existential and universal safety problems for Rabin's probabilistic finite-state automata (PFA), the subclass of Partially Observable MDPs which have zero observation. Compared to MDPs, strategies of PFAs are not state-dependent. In sharp contrast to the \PTIME\ result, we show that existential safety for PFAs is undecidable, with $H$ having closed and open boundaries. On the other hand, it turns out that the universal safety for PFAs is decidable in \EXPT, with a \coNP\ lower bound. Finally, we show that an alternate representation of the input polytope allows us to improve the complexity of universal safety for MDPs and PFAs.
  \end{abstract}

\maketitle

\section{Introduction}
\label{sec:intro}

Markov decision processes (MDPs) are a basic model for stochastic dynamical systems combining probabilistic moves with non-deterministic choices. They find applications in various domains, such as control theory, AI, networks, verification, and so on.
Theoretical study of MDPs has been focused on either qualitative (e.g. almost-sure properties) or quantitative questions on the
behavior of the MDPs. A classical question is whether there exists a strategy to resolve the non-deterministic choices, under which the behavior of the stochastic system underlying the MDP satisfies or optimizes a given objective, often maximizing rewards or satisfying constraints. There are efficient algorithms in many of these cases and considerable work has gone into making them scale in practice.

On the other hand, in the presence of partial observation, i.e., when some of the states are indistinguishable, it is known that many of these results do not hold. Indeed, for partially-observable MDPs (POMDPs) and the so-called Rabin's probabilistic finite automata (PFAs), a zero-observation restriction where all states are indistinguishable, belief distributions (or belief states) need to be considered, at least indirectly.
The belief distribution associates to each state the probability to be in that state according to the observations seen. Dealing quantitatively with the belief distribution is hard, and that is one
of the reasons why many
quantitative decision problems are undecidable for POMDPs and PFAs \cite{Madanietal,FGO12}.

\medskip

In this paper, we take an alternate view of MDPs, which gives rise to an interesting trade-off between full observation and partial information. Using {\em distribution-based objectives}, we directly reason about the belief distribution. However, unlike partial information and as in fully observable systems, the strategy of the MDP can depend upon the state of the system. This view of MDPs has several related interpretations and applications, such as transformer of probability distributions~\cite{qest11}; and as described later below, in representing the evolution of a fluid population of agents.

Having fixed this view, we focus on 
(distribution-based) \emph{safety}  objectives. 
Our goal is to determine when we can control an MDP so that the belief distribution stays within a given safe convex region. More precisely, we consider the safe region to be given as a closed and convex polytope $H$ over the set of distributions. We denote a strategy
by $\sigma$, where at each time point $i$, $\sigma(i)$ chooses for each state a (distribution over) action(s). Once a strategy is fixed, 
the transformation between the belief distributions 
at time $i$ and $i+1$ can be seen as a Markov chain $M_{\sigma(i)}$. 
We consider two questions in this setting: existential and universal safety.
The question of \emph{existential safety} asks whether there exists an initial distribution $\Delta$ in $H$ and a strategy $\sigma$ such that under $\sigma$, the belief distribution always remains in $H$, i.e., for all $n\in\mathbb{N}$, $\Delta \cdot M_{\sigma(1)} \cdots M_{\sigma(n)}\in H$. 
We also consider the dual question of \emph{universal safety}, 
which asks if for all initial distributions in $H$, 
there is a strategy remaining in $H$. 

\medskip

\begin{table*}[t!]
\begin{center}
\begin{tabular}{| c || c | c |}
\hline
Complexity of safety & MDPs & PFAs \\
\hline
\hline
Existential & PTIME-complete & Undecidable\\
\hline
Universal & co-NP-complete & EXPTIME and co-NP-hard\\
\hline
\end{tabular}
\end{center}
\caption{A summary of the results in this paper (for polytopes under the $H$-representation)
\vspace{-0.5cm}}
\label{tab:results1}

\end{table*}

Our main contributions, depicted in Table~\ref{tab:results1}, are the following: we show that both the existential and universal safety problems are decidable for MDPs, and provide tight complexity bounds. First, we show that existential safety is \PTIME-complete by showing that the safety problem over all time steps $n$ can be reduced to the existence of a special distribution. For this, we use a strong fixed point theorem,
namely the Kakutani fixed-point theorem. Hardness follows easily since the questions on convex polytopes capture linear programming. Next, we show that universal safety is \coNP-complete. Here the \coNP\ upper bound is obtained by using recent and state-of-the-art results from Quantified Linear Programming. However, hardness requires a complicated reduction.

In sharp contrast, we show that existential safety is undecidable for PFAs for $H$ with closed and open boundaries, by a somewhat surprising reduction from the {\em universal} halting problem for 2-counter machines. On the other hand, it turns out that universal safety is still decidable for PFAs but with a complexity EXPTIME and is at least coNP-hard. These results hold when the polytope is given using equations, called H-representation. When polytopes are instead given using corner points, called V-representation, we  can improve the complexity of universal safety to \PTIME\ for MDPs and \PSPACE\ for PFAs. This representation does not improve the complexity results for existential safety.

Before going to an example, we argue that these problems can be highly non-trivial. Let us consider the related problem of {\em initialized safety}, which asks whether there exists a strategy $\sigma$ in the MDP such that from a given initial distribution $\Delta \in H$, the belief distribution produced by the strategy always remains in $H$, i.e., for all $n\in\mathbb{N}$, $\Delta \cdot M_{\sigma(1)} \cdots M_{\sigma(n)} \in H$. This initialized safety problem for MDPs trivially subsumes the initialized safety problem for Markov chains (by taking the size of the alphabet to be 1). Surprisingly, it turns out that this problem is already as hard as the Skolem problem~\cite{ipl-15}, whose decidability is a long-standing open problem~\cite{halava2005}. Only some subclasses are known decidable for arbitrary dimensions, such as ultimate-positivity (equivalent to an eventual safety condition) for restricted matrices where eigenvalues have multiplicity 1~\cite{ouaknine2014ultimate}. The existential and universal safety problems can, respectively, be seen as under and over-approximations of the initialized safety problem. That is, if the existential safety problem has a negative answer, then so does the initialized safety problem, and the universal safety problem has a positive answer, then so does the initialized safety problem.

\medskip

\paragraph*{Motivating example}
As motivation, consider a population of yeasts under osmotic stress \cite{Batt}. The stress level of the population can be studied through a protein which can be marked (by a chemical reagent).  For the sake of illustration, consider the following simplistic model where a yeast can take 3 different discrete states, namely the concentration of the protein being high (state 1), medium (state 2) and low (state 3).

When a cell is on a saline substrate, it will evolve using one dynamics, described by the Markov chain $M_{sa}$, and when it is on a sorbitol substrate, it will evolve using another dynamics, described by  the Markov chain $M_{so}$, given in Fig.~\ref{fig.pop}.
These two Markov chains give the proportion of the population of yeasts
(considered as a fluid) moving from one protein concentration level to another, in one time step (say, 15 seconds) under this substrate. For instance, $20\%$ of the yeasts with low protein concentration will have high protein concentration at the next time step under a saline substrate, which is represented by the value $0.2$ in $M_{sa}$.

\begin{figure}[b!]
$$
\begin{pmatrix}
M_{sa}=
\begin{pmatrix}
0.8 & 0.1 &  0.1\\
0.1 & 0.8& 0.1\\
0.2 & 0.1 & 0.7
\end{pmatrix},
& 
M_{so}=
\begin{pmatrix}
0.3 & 0.4 &  0.3\\
0.3 & 0.5& 0.2\\
0.1 & 0.1 & 0.8
\end{pmatrix}
\end{pmatrix}
$$
\caption{Two actions $sa,so$
and their Markov Chain effect}
\label{fig.pop}
\end{figure}

The difference between the MDP and the PFA model is that with the MDP model, the substrate may vary for each yeast, while for PFAs, there is a unique substrate for the whole population. We want to control this population of yeasts, to make it stay within some reasonable convex polytope $H$, e.g., the proportion of yeasts with high concentration of the protein (in state 1) stays inside the interval $[\frac{1}{4},\frac{1}{2}]$. We can then ask two questions: whether for all initial configurations in $H$, there exists such a safe strategy, meaning that $H$ is stable, and if not, whether there exists at least one initial configuration in $H$ for which there is a strategy to stay inside $H$.

\paragraph*{Related Work}
There has been considerable work concerning Markov Chains in the distribution-based context.
As there is no choice of actions, 
this view coincides with unary PFAs.
Further, the problem considered is to perform model-checking of distribution-based properties rather than 
strategy synthesis (there is no choice to resolve). 
In \cite{BRS06}, it was shown that distribution-based properties cannot be expressed in the more classical probabilistic variant of the CTL$^*$ logic. In fact, these verification questions generalize the above mentioned initialized safety question and hence are also Skolem-hard for Markov chains~\cite{ipl-15}. However, one can find 
decidable subclasses as in~\cite{stacs16}, 
or approximate solutions for some distribution-based properties as in~\cite{AAGT12,AAGT15} and also in~\cite{Qest14}, where the related isolation problem is tackled. 

The existential safety problem 
has also been considered over {\em general real} matrices (rather than stochastic ones),
in the special deterministic case (no control involved), 
where Tiwari~\cite{Tiwari04:CAV} proved a \PTIME\ algorithm for the case where the polytope is a half space
using a fixed point approach similar to ours. However, that result uses the Brouwer's fixed point theorem, while ours needs the more powerful Kakutani's fixed point theorem as we have to deal with non-deterministic choices. More recently, a 
{\em continuous} version of existential safety 
has been proved decidable for another deterministic class (no control involved), namely Continuous Linear Dynamical Systems \cite{poly17}, using tools from Diophantine approximation.

Concerning non-deterministic systems (involving control) with distribution-based objectives, PFAs are a well-studied model. Quantitative questions are undecidable~\cite{Bertoni'71}, as well as approximating quantitative questions~\cite{Madanietal}. Even some qualitative questions are undecidable, such as the value 1 problem~\cite{GO10}, and only very restricted subclasses are known that ensure decidability of PFAs~\cite{FGO12,CT12,qest11}. MDPs with the same semantics as we use have been compared with PFAs  for the {\em qualitative} problem called almost-sure synchronization. This problem has been shown to be decidable in \PSPACE\ for MDPs \cite{DMS14},  while it is undecidable for PFAs \cite{DMS12}, using a simple reduction to the undecidable reachability for PFAs. Recently, {\em qualitative} questions on PFAs presented as discrete (non-fluid) populations have been proved decidable, using results on parametric control \cite{concur17}. Compared to these results, we show decidability of {\em quantitative} questions, namely existential and universal safety for MDPs.

\paragraph*{Structure of the Paper}
In Section 2, we start by providing the definitions and notations for MDPs and PFAs. We also define the safety problems on convex polytopes and prove some preliminary results. In Section 3, we prove our first main result, namely \PTIME-completeness of existential safety for MDPs. Section 4 is devoted to the undecidability of existential safety for PFAs. Sections 5 and 6 focus on decidability of universal safety for MDPs and PFAs respectively. Finally, in Section 7 we consider how the complexity is improved for polytopes given in the V-representation.

\section{MDPs, PFAs and safety properties}
\label{sec:defns}

In this section, we define Markov decision processes (MDPs) and 
probabilistic finite-state automata (PFAs) 
directly using a matrix notation. 
This corresponds to viewing MDPs and PFAs as transformers of probability distributions \cite{qest11} rather than state transformers, and are equivalent to the common definition via transition systems.

Let $S= \{s_1, \ldots, s_n\}$ be a set of states, $\Sigma$ a finite alphabet of actions. For all $1\leq i\leq n$, we use $\vec{s_i}$ to denote the $n$-dimensional vector, which has 1 in position $i$ and $0$ elsewhere. We use $\Delta_1,\Delta_2$ etc. to denote arbitrary (probability) distributions over $S$, i.e., $n$-dimensional vectors $\Delta\in [0,1]^n$ such that 
$\sum_{i=1}^n \Delta(i)= \sum_{i=1}^n \vec{s_i} \cdot \Delta=1$. We will sometimes use $|\cdot|_1$ to denote the $\ell_1$-norm of a vector, i.e., sum of its entries. Thus for a distribution $\Delta$, $|\Delta|_1=1$. Further, $\delta,\delta'$ will denote sub-distributions over $S$, i.e., vectors from $[0,1]^n$, such that $|\delta|_1\leq 1$.
Similarly, we will use $M,M'$ etc., to denote $n$-dimensional stochastic matrices (each row is a distribution). Any such matrix can be seen as defining the transition matrix of a Markov chain over the set of states $S$. 

\begin{definition}
A {\em Markov decision process} or a 
{\em probabilistic finite-state automaton} is a tuple 
$\cA=(S,\Sigma, (M_\alpha)_{\alpha\in\Sigma})$, where $S$ is a set of states, $\Sigma$ is the alphabet of actions, and $(M_\alpha)_{\alpha \in \Sigma}$ is a set of stochastic matrices, which will define how the probability mass in a state $s_i \in S$ is transformed playing any action $\alpha \in \Sigma$.
\end{definition}

For instance, the motivating example is a PFA/MDP with $S=\{s_1,s_2,s_3\}$, $\Sigma=\{so,sa\}$, and $M_{so},M_{sa}$ as given in Fig.~\ref{fig.pop}.

The difference between an MDP and a PFA is in the allowed one-step strategies (also called decision rules~\cite{Puterman}). We start by defining one-step strategies of PFAs, which do not depend on the state:

\begin{definition}
A one-step strategy of a 
PFA $\cA=(S,\Sigma,(M_\alpha)_{\alpha\in\Sigma})$ 
is a function $\tau : \Sigma \ra [0,1]$ such that $\sum_{\alpha \in \Sigma} \tau(\alpha)=1$. A one-step strategy $\tau$ is associated with the stochastic matrix:
$$M_\tau = \sum_{\alpha \in \Sigma} \tau(\alpha) M_\alpha$$
\end{definition}

We now define the one-step strategies of an MDP, which may depend upon the state. For $M_\alpha$ a stochastic matrix, we denote by $M_{(\alpha,j)}$  the matrix obtained by taking $M_\alpha$ and setting all rows to be the 0-vector, except for the $j$-th row (associated with state $s_j$).

\begin{definition}
A one-step strategy of an MDP over $S,\Sigma$ is a function $\tau : \Sigma \times S \ra [0,1]$ such that for all $s \in S$, $\sum_{\alpha \in \Sigma} \tau(\alpha,s)=1$. A one-step strategy $\tau$ is associated with the stochastic matrix:
$$M_\tau = \sum_{\alpha \in \Sigma, i \leq n} \tau(\alpha,s_i) M_{(\alpha,i)}$$
\end{definition}

Now, given a one-step strategy $\tau$ of an MDP or a PFA over $S,\Sigma$, applying $\tau$ at $\Delta_1$ means going from distribution $\Delta_1$ to distribution $\Delta_2 = \Delta_1 \cdot M_\tau$. A general strategy $\sigma$ is just an 
infinite sequence of one-step strategies. Given an MDP or a PFA, an initial distribution $\Delta$ and a strategy $\sigma=\tau_1\ldots$, we define 
for every $m\in\mathbb{N}$,  
the (probability) distribution 
$\Delta^\sigma_m$
over the set of states $S$ reached after $m$-steps
as $\Delta^\sigma_m=\Delta \cdot M_{\tau_1}\cdots M_{\tau_m}$.

\subsection{Safety w.r.t. a polytope}
Let $\cA$ be an MDP or a PFA over $n$ states and let $H$ be a convex polytope in $\mathbb{R}^n$. In most of the paper, we will consider that convex polytopes are defined using the so-called $H$-representation, that is as an intersection of a finite number of half spaces in $\mathbb{R}^n$, where each half-space or boundary can be written as a linear inequality. Thus, we assume that $H$ is given by a set of inequalities, and denote by $|H|$ the size of this set of inequalities. In section \ref{vrepres}, we will consider 
the $V$-representation of $H$, that is the representation given as
its finite set of extremal vertices. 
In this paper, each polytope will be convex and closed (unless explicitly stated otherwise), and we will abusively call them polytopes.
Also, all polytopes will be stochastic, that is intersected with half-spaces $\sum_{i}^n x_i \geq 1$ and $\sum_{i}^n x_i \leq 1$ to ensure that $\sum_{i}^n x_i=1$, and $0 \leq x_i \leq 1$ for all $i \leq n$.

A strategy $\sigma=\tau_1\ldots $ is said to be \emph{$H$-safe} from $\Delta_1\in H$ if for all $m\in \mathbb{N}$, $\Delta^\sigma_m=\Delta_1\cdot M_{\tau_1}\cdots M_{\tau_m}\in H$. That is,  $\sigma$ is a strategy of $\cA$ that allows us to stay forever in $H$ when starting from $\Delta_1$. 

Let $\Hwin^\cA$ be the set of distributions $\Delta$ of $H$ such that there exists a $H$-safe strategy from $\Delta$, i.e., a strategy $\sigma$ of $\cA$ staying forever in $H$ from $\Delta$. Also, we just write $H_{\text{win}}$ when $\cA$ is clear from context.

\begin{lemma}
\label{lemma1}
$\Hwin$ is exactly the
set of distributions $\Delta$
of $H$ such that there is a one step strategy $\tau$ 
such that $\Delta \cdot M_{\tau} \in \Hwin$.
\end{lemma}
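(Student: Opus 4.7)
The plan is a straightforward two-direction argument based on unrolling and extending strategies by one step. Nothing here is deep; the lemma is essentially the Bellman-style fixed-point characterization of the winning region.

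For the forward inclusion, I would take any $\Delta \in \Hwin$ with witness $H$-safe strategy $\sigma = \tau_1 \tau_2 \cdots$. I would pick $\tau := \tau_1$ as the candidate one-step strategy. Since $\sigma$ is $H$-safe from $\Delta$, already the $m = 1$ case yields $\Delta \cdot M_{\tau_1} \in H$. Moreover, the suffix $\sigma' := \tau_2 \tau_3 \cdots$ is $H$-safe from $\Delta \cdot M_{\tau_1}$, because for every $m \geq 0$,
\[
(\Delta \cdot M_{\tau_1}) \cdot M_{\tau_2} \cdots M_{\tau_{m+1}} = \Delta^\sigma_{m+1} \in H.
\]
Hence $\Delta \cdot M_\tau \in \Hwin$, which is what the right-hand side requires.

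For the reverse inclusion, suppose $\Delta \in H$ and there is a one-step strategy $\tau$ with $\Delta' := \Delta \cdot M_\tau \in \Hwin$. Let $\sigma' = \tau_1' \tau_2' \cdots$ be a witnessing $H$-safe strategy from $\Delta'$. I would form the new strategy $\sigma := \tau\, \tau_1' \tau_2' \cdots$ by prepending $\tau$. To check that $\sigma$ is $H$-safe from $\Delta$, I note first that $\Delta \in H$ by assumption; then for $m = 1$, $\Delta^\sigma_1 = \Delta \cdot M_\tau = \Delta' \in H$ (since $\Delta' \in \Hwin \subseteq H$); and for $m \geq 2$,
\[
\Delta^\sigma_m = \Delta \cdot M_\tau \cdot M_{\tau_1'} \cdots M_{\tau_{m-1}'} = \Delta' \cdot M_{\tau_1'} \cdots M_{\tau_{m-1}'} \in H
\]
by $H$-safety of $\sigma'$ from $\Delta'$. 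Thus $\Delta \in \Hwin$.

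There is no real obstacle: the lemma is purely syntactic manipulation of strategies, and both directions reduce to appending or chopping off one decision rule. The only point that must be stated cleanly is that the definition of a general strategy as an infinite sequence of one-step strategies allows arbitrary concatenation, so that prepending $\tau$ to $\sigma'$ yields a valid strategy in the same class (MDP or PFA) as $\sigma'$; this is immediate from the definitions given in Section~\ref{sec:defns}.
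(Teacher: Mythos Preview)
Your proof is correct and follows essentially the same approach as the paper's own proof: both directions are handled by chopping off (respectively prepending) the first one-step strategy and verifying that the resulting suffix (respectively extended) sequence is $H$-safe. Your write-up is slightly more explicit in spelling out the index shift, but the argument is identical in substance.
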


We now state a classical result for MDPs as transformers of probability distributions, which will imply that $\Hwin^\cA$ is a convex set
for every MDP $\cA$. We give a proof in the appendix for sake of completeness.  This can also be found in~\cite[Lemma~2.5]{setchains}, where the result is stated in terms of properties of so-called row-independent Markov set-chains (of which MDPs are an example).

\begin{lemma} 
\label{convMDP}
Let $x,y \in H$ be such that there exist two one-step MDP-strategies $\tau_x,\tau_y$ with
$x \cdot M_{\tau_x} \in H$ and
$y \cdot M_{\tau_y} \in H$.
Then for every distribution $z \in [x,y]$ (that is $z = \lambda x + (1-\lambda) y, \lambda \in [0,1]$) there is also a one-step MDP-strategy leading from $z$ to $H$.
\end{lemma}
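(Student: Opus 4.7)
The plan is to construct the desired one-step strategy $\tau_z$ explicitly, by mixing $\tau_x$ and $\tau_y$ in a state-dependent way that reflects how much of the mass at each state $s_i$ comes from $x$ versus from $y$. Concretely, I would define, for each state $s_i$ with $z(i) > 0$,
\[
\tau_z(\alpha, s_i) \;=\; \frac{\lambda\, x(i)\, \tau_x(\alpha, s_i) + (1-\lambda)\, y(i)\, \tau_y(\alpha, s_i)}{z(i)},
\]
and choose $\tau_z(\cdot, s_i)$ arbitrarily (say, concentrated on a single action) when $z(i) = 0$. Note that $z(i) = 0$ forces $x(i) = y(i) = 0$ whenever $\lambda \in (0,1)$, so the choice on such states does not influence the resulting distribution. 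The boundary cases $\lambda \in \{0,1\}$ are immediate.

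The first step is to verify that $\tau_z$ is a bona fide one-step MDP-strategy, i.e.\ $\tau_z(\alpha,s_i) \geq 0$ and $\sum_\alpha \tau_z(\alpha,s_i) = 1$ for each $s_i$. Non-negativity is clear, and the normalization follows by summing over $\alpha$ in the numerator: $\sum_\alpha \tau_x(\alpha,s_i) = \sum_\alpha \tau_y(\alpha,s_i) = 1$, so the numerator becomes $\lambda x(i) + (1-\lambda) y(i) = z(i)$.

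The second step is to compute $z \cdot M_{\tau_z}$ component-wise. Using the definition of $M_{\tau_z}$ and the construction of $\tau_z$, the $j$-th component equals
\[
\sum_{i,\alpha} z(i)\,\tau_z(\alpha,s_i)\,(M_\alpha)_{ij}
\;=\; \sum_{i,\alpha}\bigl[\lambda x(i)\tau_x(\alpha,s_i) + (1-\lambda) y(i)\tau_y(\alpha,s_i)\bigr](M_\alpha)_{ij},
\]
which factorizes as $\lambda (x \cdot M_{\tau_x})_j + (1-\lambda)(y \cdot M_{\tau_y})_j$. Hence $z \cdot M_{\tau_z}$ is a convex combination of two points of $H$, and lies in $H$ by convexity. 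The cancellation of the $z(i)$ denominator with the $z(i)$ coefficient is what makes the formula work even though $\tau_z$ itself is obtained via a ratio.

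The only subtle point, and where the proof genuinely uses that we are in the MDP setting and not the PFA setting, is that $\tau_z$ is allowed to depend on the state $s_i$: the weighting $x(i)/z(i)$ versus $y(i)/z(i)$ varies from state to state. In the PFA setting this state-dependent mixing would not be a legal strategy, which is why the analogous convexity statement fails there. Beyond this conceptual remark, the argument is a direct calculation, and I would expect it to be the appendix-style completeness check the authors mention rather than a source of real difficulty.
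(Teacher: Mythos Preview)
Your proposal is correct and follows essentially the same approach as the paper: both define $\tau_z(\alpha,s_i)$ as the state-dependent convex combination $\frac{\lambda x_i}{z_i}\tau_x(\alpha,s_i)+\frac{(1-\lambda)y_i}{z_i}\tau_y(\alpha,s_i)$, verify it is a valid one-step MDP strategy, and compute $z\cdot M_{\tau_z}=\lambda(x\cdot M_{\tau_x})+(1-\lambda)(y\cdot M_{\tau_y})\in H$ by convexity. If anything, your treatment of the $z(i)=0$ case and the boundary values $\lambda\in\{0,1\}$ is slightly more explicit than the paper's appendix proof.
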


Lemma~\ref{convMDP} can be trivially extended by induction
for the case where one-step strategies $\tau_x,\tau_y$ are replaced by
$H$-safe (full) strategies $\sigma_x,\sigma_y$, i.e., strategies staying in $H$ forever from $x$ and $y$:

\begin{lemma} 
 \label{lem:str}
Let $x,y \in H$ be such that there exist two $H$-safe MDP-strategies $\sigma_x,\sigma_y$ from $x$ and $y$.
Then for every distribution $z \in [x,y]$ (that is $z = \lambda x + (1-\lambda) y, \lambda \in [0,1]$) there is also a $H$-safe MDP-strategy $\sigma_z$ from $z$.
\end{lemma}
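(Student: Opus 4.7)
The plan is to build $\sigma_z$ one step at a time, carrying an invariant that aligns the three trajectories. Write $\sigma_x = \tau_1^x\tau_2^x\cdots$ with reached distributions $x_m = x\cdot M_{\tau_1^x}\cdots M_{\tau_m^x}\in H$, and similarly $y_m \in H$. I will inductively produce one-step strategies $\tau_m^z$ so that the distribution $z_m = z\cdot M_{\tau_1^z}\cdots M_{\tau_m^z}$ satisfies
\[ z_m = \lambda\, x_m + (1-\lambda)\, y_m, \]
which automatically lies in $H$ by convexity of $H$.

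The base case $m=0$ holds by the hypothesis $z = \lambda x + (1-\lambda)y$. For the inductive step, assume the invariant holds at step $m$. Since $\sigma_x,\sigma_y$ are $H$-safe, both $x_m\cdot M_{\tau_{m+1}^x} = x_{m+1}$ and $y_m\cdot M_{\tau_{m+1}^y} = y_{m+1}$ lie in $H$. Thus the hypotheses of Lemma~\ref{convMDP} apply with the points $x_m,y_m$ and the one-step strategies $\tau_{m+1}^x,\tau_{m+1}^y$. I need slightly more than the bare statement of Lemma~\ref{convMDP}: I need the particular one-step strategy $\tau_{m+1}^z$ it produces to send $z_m$ precisely to $\lambda x_{m+1} + (1-\lambda)y_{m+1}$. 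This is exactly what the natural mixture construction yields: for each state $s_i$, set
\[ \tau_{m+1}^z(\alpha, s_i) \;=\; \frac{\lambda\, x_m(i)}{z_m(i)}\,\tau_{m+1}^x(\alpha, s_i) \;+\; \frac{(1-\lambda)\, y_m(i)}{z_m(i)}\,\tau_{m+1}^y(\alpha, s_i) \]
whenever $z_m(i) > 0$ (arbitrary otherwise). Since each row of $M_{\tau_{m+1}^z}$ weighted by $z_m(i)$ equals the $\lambda,(1-\lambda)$-weighted sum of the corresponding rows of $M_{\tau_{m+1}^x}$ and $M_{\tau_{m+1}^y}$ weighted by $x_m(i)$ and $y_m(i)$, summing over $i$ gives $z_m\cdot M_{\tau_{m+1}^z} = \lambda x_{m+1} + (1-\lambda)y_{m+1}$, restoring the invariant.

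Concatenating the $\tau_m^z$'s produces a strategy $\sigma_z$ whose every reached distribution is a convex combination of elements of $H$, hence in $H$, so $\sigma_z$ is $H$-safe from $z$. The only substantive point, and the place where one must look inside Lemma~\ref{convMDP} rather than use it as a black box, is that the convex combination of the one-step strategies chosen by the state-dependent weighting above actually realizes the convex combination of successor distributions; this is the key feature of state-dependent MDP strategies (it would fail for PFAs, where the strategy may not depend on the state), and it is precisely what powers the inductive propagation of the invariant across arbitrarily many steps.
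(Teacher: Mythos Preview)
Your proof is correct and follows essentially the same approach as the paper: the paper simply states that Lemma~\ref{convMDP} ``can be trivially extended by induction'' to full strategies, and your write-up spells out precisely that induction, using the explicit state-weighted mixture $\tau_z^i = \frac{\lambda x_i}{z_i}\tau_x^i + \frac{(1-\lambda)y_i}{z_i}\tau_y^i$ from the proof of Lemma~\ref{convMDP} to propagate the invariant $z_m = \lambda x_m + (1-\lambda)y_m$. You are also right to flag that one must use the \emph{construction} inside Lemma~\ref{convMDP} (which lands exactly at $\lambda\, x\cdot M_{\tau_x} + (1-\lambda)\, y\cdot M_{\tau_y}$) rather than its bare statement.
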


Lemma~\ref{lem:str} implies the convexity of the set $\Hwin^\cA$ 
for $\cA$ an MDP:

\begin{proposition}
  \label{prop:convex}
Let $\cA$ be an MDP. Let $\Delta_1, \ldots, \Delta_k$ be distributions in $\Hwin^\cA$, and let 
$\lambda_1, \cdots, \lambda_k \in [0,1]$ such that 
$\sum_i \lambda_i=1$. 
Then $\Delta = \sum_i \lambda_i \Delta_i\in \Hwin^\cA$.
\end{proposition}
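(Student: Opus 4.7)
The plan is to prove this by induction on $k$, using Lemma~\ref{lem:str} (the two-point case) as the inductive engine. The base case $k=1$ is trivial since $\Delta = \Delta_1 \in H_{\text{win}}^\cA$ by assumption, and $k=2$ is exactly the statement of Lemma~\ref{lem:str} applied with $\lambda = \lambda_1$ (note Lemma~\ref{lem:str} is stated in terms of $H$-safe strategies, but producing an $H$-safe strategy from $z$ is equivalent to showing $z \in H_{\text{win}}^\cA$ by definition).

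For the inductive step, assume the claim holds for every convex combination of at most $k-1$ points in $H_{\text{win}}^\cA$, and consider $\Delta = \sum_{i=1}^k \lambda_i \Delta_i$. If some $\lambda_i = 1$, then $\Delta = \Delta_i \in H_{\text{win}}^\cA$ directly. Otherwise, we may assume (by reindexing) that $\lambda_k < 1$. Set $\mu = \sum_{i=1}^{k-1} \lambda_i = 1 - \lambda_k > 0$, and define
\[
\Delta' \;=\; \sum_{i=1}^{k-1} \frac{\lambda_i}{\mu}\, \Delta_i.
\]
Since $\sum_{i=1}^{k-1} \lambda_i/\mu = 1$ and each coefficient lies in $[0,1]$, $\Delta'$ is a convex combination of $k-1$ points of $H_{\text{win}}^\cA$, so by the induction hypothesis, $\Delta' \in H_{\text{win}}^\cA$. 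Now $\Delta = \mu \Delta' + (1-\mu)\Delta_k = \mu \Delta' + \lambda_k \Delta_k$ lies on the segment $[\Delta',\Delta_k]$, with both endpoints in $H_{\text{win}}^\cA$. Applying Lemma~\ref{lem:str} to these two endpoints with parameter $\mu$ yields an $H$-safe strategy from $\Delta$, and hence $\Delta \in H_{\text{win}}^\cA$, completing the induction.

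There is essentially no serious obstacle here: the convexity of $H_{\text{win}}^\cA$ reduces cleanly to the two-point case of Lemma~\ref{lem:str}, and the only care needed is to handle the degenerate coefficients ($\lambda_k = 1$ or some $\lambda_i = 0$) so that the rescaling by $\mu$ is well-defined. Note also that the fact $\Delta$ itself lies in $H$ (needed before even asking whether it is in $H_{\text{win}}^\cA$) follows from convexity of $H$, which holds by hypothesis on the polytope.
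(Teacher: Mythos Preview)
Your proof is correct and follows essentially the same approach as the paper: the paper simply states that Lemma~\ref{lem:str} implies the convexity of $\Hwin^\cA$ without spelling out the details, and you have supplied the standard induction on $k$ that reduces general convex combinations to the two-point case of Lemma~\ref{lem:str}.
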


\medskip

Finally, notice that Lemma~\ref{convMDP} is not true for PFAs. Indeed, consider a PFA $\cA_1$ over 4 states $(s_1,s_2,s_3,s_4)$ as shown in Figure~\ref{fig:pfa}.  Action $\alpha$ 
sends the mass from $s_4$ to $s_1$, and the remaining mass is kept where they are.
Action $\beta$ sends the mass from $s_3$ to $s_2$, and the remaining is kept where they are. Let $H = [\vec{s_3},\vec{s_4}]$ be the segment from $\vec{s_3}$ to $\vec{s_4}$, that is defined by the half planes $P(s_1)=0$ (two half planes, one with $P(s_1) \geq 0$ and one with $P(s_1) \leq 0$), $P(s_2)=0$, $0 \leq P(s_3) \leq 1$, 
$0 \leq P(s_4) \leq 1$ and $P(s_3)+P(s_4)=1$.

Consider distributions $x=\vec{s_3}$ and $y=\vec{s_4}$, i.e., $x(s_3)=1, x(s_1)=x(s_2)=x(s_4)=0$. Consider the one-step PFA strategies $\tau_x$ playing $\alpha$
and $\tau_y$ playing $\beta$, i.e., $\tau_x(\alpha)=1$, $\tau_x(\beta)=0$ and $\tau_y(\beta)=1,\tau_y(\alpha)=0$. We have $x \cdot M_{\tau_x}=x \in H$ and $y \cdot M_{\tau_y} =y\in H$. For any $\lambda \in (0,1)$, consider $z= \lambda x + (1-\lambda) y$.
As the mass in both $s_3,s_4$ are strictly positive, every one-step strategy $\tau$ puts some non-zero mass in $s_1$ or $s_2$, and thus goes out of $H$. Using MDP strategies which can depend upon states, it suffices to play $\alpha$ from $s_3$ and $\beta$ from $s_4$ to have
$z \cdot M_\tau =z \in H$, i.e., $\tau(s_3,\alpha)=1=\tau(s_4,\beta)$ and 
$\tau(s_4,\alpha)=0=\tau(s_3,\beta)$.

\begin{center}
  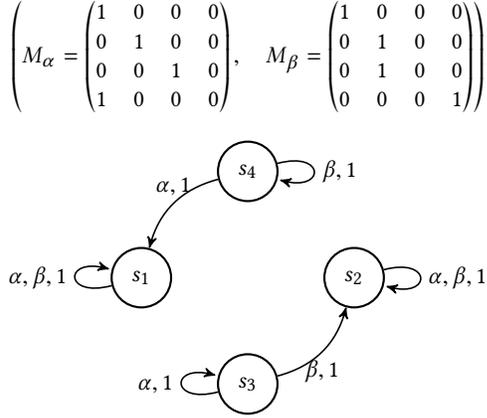
\begin{figure}[t]

$$
\begin{pmatrix}
M_{\alpha}=
\begin{pmatrix}
1 & 0 & 0 & 0\\
0 & 1 & 0 & 0\\
0 & 0 & 1 & 0\\
1 & 0 & 0 & 0
\end{pmatrix},
& 
M_{\beta}=
\begin{pmatrix}
1 & 0 & 0 & 0\\
0 & 1 & 0 & 0\\
0 & 1 & 0 & 0\\
0 & 0 & 0 & 1
\end{pmatrix}
\end{pmatrix}
$$
\medskip

\begin{tikzpicture}[->, >=stealth', auto, semithick, node distance=2cm]
\tikzstyle{every state}=[fill=white,draw=black,thick,text=black,scale=1]
\node[state]    (A)                     {$s_1$};
\node[state]    (D)[above right of=A]   {$s_4$};
\node[state]    (B)[below right of=D]   {$s_2$};
\node[state]    (C)[below left of=B]   {$s_3$};

\path
(A) edge[loop left]    node{$\alpha,\beta,1$}      (A)
(B) edge[loop right]   node{$\alpha,\beta,1$} (B)
(C) edge[loop left]    node{$\alpha,1$} (C)
    edge[bend right,below]     node{$\beta,1$}    (B)
(D) edge[loop right]   node{$\beta,1$}     (D)
    edge[bend right, above]     node{$\alpha,1$}   (A);    
\end{tikzpicture}
\caption{PFA $\cA_1$ with two actions $\alpha,\beta$}
\label{fig:pfa}
\end{figure}
\end{center}

\subsection{The problem definitions}
In this paper, our focus is on {\em safety} properties stated on the distributions. 
We now define the problems we tackle formally.
\begin{definition}[The existential and universal safety problems for MDPs and PFAs]
  Given an MDP or a PFA $\cA$ over $n$ states, and a closed convex polytope $H$ in $\mathbb{R}^n$,
  \begin{itemize}
    \item the \emph{existential safety problem} asks whether there exists an initial distribution $\Delta$ in $H$ and a $H$-safe strategy of $\cA$ from $\Delta$. In other words, is $\Hwin^\cA \neq \emptyset$?
    \item the \emph{universal safety problem} asks whether, for all initial  distributions $\Delta$ in $H$, there exists a $H$-safe strategy of $\cA$ from $\Delta$, i.e., is it the case that $H=H^\cA_{win}$.
\end{itemize}
\end{definition}

The rest of this paper is devoted to solving these problems. 
We tackle the decidability of these problems, 
as well as study their complexity, providing both upper and lower bounds.

\section{Existential safety for MDPs}
\label{sec:mdp}
In this section, we address the existential safety problem for MDPs and show its decidability.
\begin{theorem}
\label{th.existmdp}
The existential safety problem for MDPs is \PTIME-complete.
\end{theorem}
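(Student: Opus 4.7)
The plan is to reduce existential safety to checking whether there exist $\Delta \in H$ and a one-step MDP strategy $\tau$ satisfying $\Delta \cdot M_\tau = \Delta$; call this the \emph{fixed-point condition}. The ``if'' direction is trivial: playing $\tau$ forever witnesses $H$-safety from $\Delta$. The main work is the converse and the observation that the fixed-point condition is expressible as a linear program.

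For the converse I invoke Kakutani's fixed-point theorem. First, I establish that $\Hwin^\cA$ is a non-empty, compact and convex subset of the simplex. Convexity comes from Proposition~\ref{prop:convex}. Closedness I obtain from the inductive characterization $\Hwin^\cA = \bigcap_{k \geq 0} H_k$, where $H_0 = H$ and $H_{k+1} = \{\Delta \in H_k : \exists \tau,\ \Delta \cdot M_\tau \in H_k\}$: each $H_k$ is closed and convex by induction on $k$ using Lemma~\ref{convMDP} and compactness of the one-step strategy space, and the equality $\Hwin^\cA = \bigcap_k H_k$ follows from a stepwise extraction of strategy choices from a compact set together with closedness of each $H_k$. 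Then I define the correspondence $F : \Hwin^\cA \to 2^{\Hwin^\cA}$ by $F(\Delta) = \{\Delta \cdot M_\tau : \tau \text{ is a one-step strategy with } \Delta \cdot M_\tau \in \Hwin^\cA\}$. Lemma~\ref{lemma1} gives non-emptiness of the values; linearity of $\tau \mapsto M_\tau$ combined with convexity of $\Hwin^\cA$ gives convex values; continuity of $(\Delta, \tau) \mapsto \Delta \cdot M_\tau$ together with closedness of $\Hwin^\cA$ gives a closed graph. Kakutani's theorem then produces $\Delta^\ast \in F(\Delta^\ast)$, that is, $\Delta^\ast = \Delta^\ast \cdot M_\tau$ for some $\tau$.

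To decide the fixed-point condition in \PTIME, I linearize the bilinear equation $\Delta = \Delta \cdot M_\tau$ via the substitution $y_{i,\alpha} = \Delta(i)\, \tau(\alpha, s_i)$. The resulting linear system has constraints $\sum_\alpha y_{i,\alpha} = \Delta(i)$ for every state $i$, $\Delta(j) = \sum_{i,\alpha} y_{i,\alpha}\, M_\alpha(i,j)$ for every state $j$, $y_{i,\alpha} \geq 0$, $\Delta \in H$, and $|\Delta|_1 = 1$. Any feasible $(\Delta, y)$ recovers a valid $\tau$ by $\tau(\alpha, s_i) = y_{i,\alpha}/\Delta(i)$ whenever $\Delta(i) > 0$ (arbitrary stochastic choice otherwise), and conversely. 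This linear program has size polynomial in the input and is solvable in \PTIME.

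For \PTIME-hardness, I reduce from linear programming feasibility, which is \PTIME-complete: given a stochastic polytope $H'$, take the MDP with a single action whose transition matrix is the identity; then $\Hwin^\cA = H'$, so existential safety reduces to the non-emptiness of $H'$. The main technical obstacle I foresee lies in verifying the Kakutani hypotheses rigorously---especially the closedness of $\Hwin^\cA$---since naive limit arguments in the infinite-horizon strategy space can be delicate and must be anchored by the inductive description of $\Hwin^\cA$ above.
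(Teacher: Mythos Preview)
Your proposal is correct and follows essentially the same route as the paper: reduce existential safety to the fixed-point condition via Kakutani applied to $\Hwin$, linearize the bilinear equation $\Delta=\Delta\cdot M_\tau$ through the substitution $y_{i,\alpha}=\Delta(i)\,\tau(\alpha,s_i)$ (the paper's $\mu_i^j$), and derive \PTIME-hardness from LP feasibility using the single-action identity MDP. Your justification of the closedness of $\Hwin$ via the decreasing intersection $\bigcap_k H_k$ is in fact more careful than the paper's one-line ``closed as $H$ is closed.''
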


To understand the difficulty of the question, note that even if 
we guess a correct $\Delta, \sigma$, verifying that 
$\sigma$ is a $H$-safe strategy from $\Delta$ is highly non-trivial. Indeed, we would need to check for all $m\in\mathbb{N}$, 
$\Delta_m^\sigma \in H$. As mentioned in the introduction, 
even in the simple case where there is a single action ($|\Sigma|=1$), $\cA$ is just a Markov chain, and the problem is already as hard as the so-called 
Skolem problem \cite{ipl-15}
whose decidability has been opened for decades.

However, when we ask for existence of a safe initial starting distribution, we prove that the problem becomes surprisingly simpler. The main crux of the idea is to prove a fixed point characterization: 
a $H$-safe strategy exists iff there exists a strategy that fixes some distribution of $H$. Thus it suffices to search 
for $(\Delta,\tau)$ such that $\Delta = \Delta \cdot M_\tau \in H$. 
We show that it can be done in polynomial time, by cleverly writing it as a linear program.

For the case where $|\Sigma|=1$, i.e., there is a single 
action,
one can adapt Tiwari's proof~\cite{Tiwari04:CAV} and show that such a fixed point characterization does hold by appealing to Brouwer's fixed point theorem. We cannot lift this directly to the case of MDPs or PFAs since we have multiple actions/matrices. Our main contribution in this section is to show that we can overcome this by exploiting the nice structure of MDPs and obtain a fixed point characterization, by appealing to the more powerful Kakutani's fixed point theorem. To do so, we crucially use the convexity of $\Hwin^\cA$, that we proved for an MDP $\cA$ in the previous section (essentially inspired from Markov set chain theory~\cite{setchains}). 
Let us start by recalling the statement of Kakutani's fixed point theorem~\cite{kakutani1941}.

\begin{theorem}[Kakutani's Fixed Point Theorem]
	Let $S$ be a non-empty, compact and convex subset of some Euclidean space $\mathbb{R}^n$. Let $f: S \rightarrow 2^S$ be an upper hemicontinuous set-valued function on $S$ with the property that $f(x)$ is non-empty, closed and convex for all $x \in S$. Then $f$ has a \emph{fixed point}, i.e., there exists $x\in S$ s.t. $x \in f(x)$.
\end{theorem}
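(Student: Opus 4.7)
The plan is to derive Kakutani's theorem from Brouwer's fixed-point theorem by approximating the set-valued map $f$ by a sequence of continuous single-valued self-maps of $S$. As a preliminary reduction I would enclose $S$ in a simplex $T\subset\mathbb{R}^n$ and use the nearest-point retraction $r:T\to S$, which is well-defined and continuous because $S$ is closed and convex. This allows me to work with triangulations of the simpler object $T$; if $S$ were itself a polytope, I could instead triangulate $S$ directly and skip $r$.

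Next, for each integer $k\geq 1$ I would fix a simplicial triangulation $T_k$ of $T$ of mesh at most $1/k$ and, at every vertex $v$ of $T_k$, pick a selection $y_v\in f(r(v))$, possible because $f(r(v))$ is non-empty. Extending these choices affinely across each simplex of $T_k$ produces a continuous map $f_k:T\to\mathbb{R}^n$, and since $S$ is convex and every $y_v\in S$, one has $f_k(T)\subseteq S\subseteq T$. Brouwer applied to $f_k$ yields a fixed point $x_k\in T$, which in fact lies in $S$ because $x_k=f_k(x_k)\in S$. By compactness of $S$ a subsequence, still denoted $x_k$, converges to some $x^\star\in S$, and the remaining task is to prove $x^\star\in f(x^\star)$.

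To do so I would exploit the structure of $f_k$ at $x_k$: if $x_k$ lies in a simplex of $T_k$ with vertices $v_1^{(k)},\dots,v_{n+1}^{(k)}$ and barycentric coordinates $\lambda_i^{(k)}$, then
\[
x_k \;=\; f_k(x_k) \;=\; \sum_{i=1}^{n+1}\lambda_i^{(k)}\, y_{v_i^{(k)}}, \qquad y_{v_i^{(k)}}\in f\bigl(r(v_i^{(k)})\bigr).
\]
Since the mesh of $T_k$ tends to $0$, each $v_i^{(k)}\to x^\star$, hence $r(v_i^{(k)})\to x^\star$. Define the closed convex fattening $F_\varepsilon=\{y\in\mathbb{R}^n : d(y,f(x^\star))\leq\varepsilon\}$; by upper hemicontinuity of $f$, for every $\varepsilon>0$ there exists $k_0$ with $f(r(v_i^{(k)}))\subseteq F_\varepsilon$ for all $k\geq k_0$ and all $i$. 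Convexity of $F_\varepsilon$ then forces the convex combination $x_k$ into $F_\varepsilon$; letting $k\to\infty$ and $\varepsilon\to 0$, and using that $f(x^\star)$ is closed, gives $x^\star\in f(x^\star)$.

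The step I expect to be most delicate is the final passage to the limit: upper hemicontinuity is phrased in terms of open neighbourhoods of the image set, so I must verify that $F_\varepsilon$ is genuinely a neighbourhood of $f(x^\star)$ (this is immediate from compactness of $f(x^\star)$, which is closed inside the compact $S$) and, crucially, that $F_\varepsilon$ is convex, which is where the hypothesis that $f(x^\star)$ is convex enters in a non-trivial way. The remaining ingredients, namely triangulations of bounded mesh, barycentric averaging, Brouwer's theorem, and subsequential compactness, are standard, so the real content of the proof lies in combining upper hemicontinuity with the convex-values hypothesis to transport the approximate fixed points $x_k$ into ever-tighter neighbourhoods of $f(x^\star)$.
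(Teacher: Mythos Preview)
Your proof sketch is a correct outline of the classical derivation of Kakutani's theorem from Brouwer's theorem via simplicial approximation; the argument is essentially Kakutani's original one. However, the paper does not prove this theorem at all: it merely recalls the statement with a citation to \cite{kakutani1941} and then applies it as a black box to the set-valued map $f:\Hwin\to 2^{\Hwin}$. So there is nothing to compare---the paper treats Kakutani's theorem as a known tool, whereas you have supplied an actual proof of it.
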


Recall that upper-hemicontinuity means that for all open sets $O$, if $f(a) \subseteq O$, then there is an open set $N$ s.t. $a \in N$ and for all $a' \in N$, $f(a') \subseteq O$. Now, let $\cA$ be an MDP. Consider $S=\Hwin^\cA$. It is a convex region by Proposition~\ref{prop:convex}.
It is also closed as $H$ is closed. It is bounded as it is a subset of the set of distributions over $n$ variables, and thus compact as the dimension $n$ is finite. Consider the following function:

\begin{lemma}
Let $f: \Hwin \ra 2^{\Hwin}$ with $f(\Delta) = \{\Delta' \in \Hwin \mid \Delta' = \Delta \cdot M_{\tau}$ for some one-step strategy $\tau\}$. Then for all $\Delta \in \Hwin$, $f(\Delta)\neq \emptyset$, and $f$ is upper hemicontinuous.
\end{lemma}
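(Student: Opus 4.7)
The plan for non-emptiness is to invoke Lemma~\ref{lemma1} directly: for $\Delta \in \Hwin$, the lemma hands us a one-step strategy $\tau$ with $\Delta \cdot M_\tau \in \Hwin$, and this element witnesses $f(\Delta) \neq \emptyset$.

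For upper hemicontinuity, the plan is a sequential compactness argument, which I would set up with three preliminary ingredients. (i) The set $T$ of one-step MDP strategies is a product of $|S|$ probability simplices over $\Sigma$, hence a compact convex subset of Euclidean space. (ii) The evaluation map $(\Delta,\tau) \mapsto \Delta \cdot M_\tau$ is polynomial in the entries of $\Delta$ and $\tau$, and hence jointly continuous. (iii) $\Hwin$ is closed, as already noted in the preceding paragraph, and is contained in the probability simplex, hence compact.

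Given these, I would prove upper hemicontinuity by contradiction. Fix $\Delta \in \Hwin$ and an open set $O$ with $f(\Delta) \subseteq O$, and suppose no open neighborhood of $\Delta$ in $\Hwin$ is mapped into $O$ by $f$. Taking shrinking balls around $\Delta$ intersected with $\Hwin$, I can extract sequences $\Delta_n \to \Delta$ in $\Hwin$ and $y_n \in f(\Delta_n) \setminus O$, and write $y_n = \Delta_n \cdot M_{\tau_n}$ with $\tau_n \in T$. Using compactness of $T$, pass to a subsequence along which $\tau_n \to \tau^* \in T$; joint continuity then gives $y_n \to \Delta \cdot M_{\tau^*} =: y^*$. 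Since $y_n \in \Hwin$ for all $n$ and $\Hwin$ is closed, $y^* \in \Hwin$, so $y^* \in f(\Delta) \subseteq O$. But the complement of $O$ in $\Hwin$ is closed and contains every $y_n$, hence contains $y^*$, contradicting $y^* \in O$.

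The one step that really needs care is lining up the right compactness and continuity statements; once that is done, the rest is the standard ``closed graph plus compact values'' route to upper hemicontinuity. The one place where a fact specific to our setting is used (rather than generalities about set-valued maps) is closedness of $\Hwin$ itself, which is exactly what lets the limiting image $y^* = \Delta \cdot M_{\tau^*}$ land back inside $\Hwin$ and thus in $f(\Delta)$.
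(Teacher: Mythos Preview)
Your proof is correct and follows essentially the same approach as the paper: non-emptiness via Lemma~\ref{lemma1}, and upper hemicontinuity by a sequential contradiction argument that extracts convergent subsequences of strategies (using compactness of the strategy space) and uses continuity of $(\Delta,\tau)\mapsto \Delta\cdot M_\tau$ to force the limit into $f(\Delta)\subseteq O$ while it also lies in the closed set $\Hwin\setminus O$. The only cosmetic difference is that the paper first extracts a convergent subsequence of the images $b_i$ and then of the $\tau_i$, whereas you go straight to the $\tau_n$; both orderings yield the same contradiction.
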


\begin{proof}
The first statement follows directly from Lemma~\ref{lemma1}. For the second statement, assume by contradiction that $f$ is not upper hemicontinuous. Then there is an open set $O$ and $f(a) \subseteq O$, and a sequence $a_i$ converging towards $a$ such that there is $b_i \in  f(a_i)$ and $b_i \in (\Hwin \setminus{O})$. As $\Hwin$ is a compact set, we can extract a converging subsequence. Let $b$ be the limit of this sequence. We have that $b \in (\Hwin\setminus{O})$ as $(\Hwin\setminus{O})$ is closed. 

Now, by definition of $f$, we have one step strategies $\tau_i$ s.t. $b_i = a_i \cdot M_{\tau_i}$. The space of one-step strategies is trivially compact. So we can again extract from $(\tau_i)$
a converging subsequence. Let $\tau$ be the limit of this subsequence. Now, $a_i\cdot M_{\tau_i}$ tends towards $a \cdot M_\tau$ by continuity of linear operators. As $a_i \cdot M_{\tau_i} = b_i$, it also converges towards $b$. Hence $b=a \cdot M_\tau$ (the limit is unique). Thus $b \in f(a) \subseteq  O$, that is, $b \in O$, a contradiction with $b \in \Hwin \setminus O$.
\end{proof}

We now define $X = \{\Delta \in H \mid \Delta = \Delta \cdot M_{\tau}$ for some one-step strategy $\tau\}$. This is a subset of $\Hwin$.
Using Kakutani's fixed point theorem, we obtain:

\begin{lemma}
\label{lem:hwin}
  $\Hwin \neq \emptyset$ iff
$X \neq \emptyset$.
\end{lemma}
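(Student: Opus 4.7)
\textbf{Proof plan for Lemma~\ref{lem:hwin}.} The reverse direction is immediate: if $\Delta \in X$ then $\Delta \in H$ and $\Delta = \Delta \cdot M_\tau$ for some one-step strategy $\tau$, so the stationary strategy $\tau \tau \tau \cdots$ keeps the trajectory at $\Delta$ forever and hence in $H$, witnessing $\Delta \in \Hwin$. The interesting direction is the forward one, and the plan is to apply Kakutani's fixed point theorem to $f$ restricted to the set $S = \Hwin$.

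First I would check that $S=\Hwin$ meets Kakutani's hypotheses. It is non-empty by assumption, convex by Proposition~\ref{prop:convex}, and bounded because it is contained in the probability simplex over $n$ states (hence lives in a compact subset of $\mathbb{R}^n$). For closedness I would argue just as in the upper-hemicontinuity proof of the preceding lemma: if $\Delta_i \in \Hwin$ converges to $\Delta$, extract $H$-safe strategies $\sigma_i$, note that the space of (infinite sequences of) one-step strategies is compact, pass to a converging subsequence with limit $\sigma$, and observe that each coordinate $\Delta_i \cdot M_{\tau_{i,1}} \cdots M_{\tau_{i,m}}$ converges to $\Delta \cdot M_{\tau_1} \cdots M_{\tau_m}$ by continuity of the matrix product, so the limit strategy $\sigma$ is $H$-safe from $\Delta$ since $H$ is closed.

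Next I would verify the hypotheses on $f$. Upper hemicontinuity and non-emptiness are already given by the preceding lemma. For convexity of $f(\Delta)$, if $\Delta_1' = \Delta \cdot M_{\tau_1}$ and $\Delta_2' = \Delta \cdot M_{\tau_2}$ both lie in $\Hwin$, then for $\lambda \in [0,1]$ the one-step strategy $\tau = \lambda \tau_1 + (1-\lambda)\tau_2$ is again a valid one-step strategy of the MDP and $\Delta \cdot M_\tau = \lambda \Delta_1' + (1-\lambda)\Delta_2'$; this convex combination lies in $\Hwin$ by Proposition~\ref{prop:convex}, so it lies in $f(\Delta)$. For closedness of $f(\Delta)$, a sequence $b_i = \Delta \cdot M_{\tau_i}$ in $f(\Delta)$ with $b_i \to b \in \Hwin$ yields, by compactness of the space of one-step strategies, a subsequential limit $\tau$ with $\Delta \cdot M_\tau = b$, so $b \in f(\Delta)$.

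With all hypotheses in place, Kakutani's theorem produces $\Delta \in \Hwin$ and a one-step strategy $\tau$ with $\Delta = \Delta \cdot M_\tau$. Since $\Hwin \subseteq H$, this $\Delta$ lies in $X$, establishing $X \neq \emptyset$. The only subtle point I expect is confirming closedness of $\Hwin$ itself (a limit of $H$-safe strategies is $H$-safe, which uses both the compactness of one-step strategies and the closedness of $H$); the convexity and continuity checks for $f$ are then routine given the preceding lemma and Proposition~\ref{prop:convex}.
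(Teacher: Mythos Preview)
Your proposal is correct and follows the same Kakutani-based approach as the paper. In fact you are more careful than the paper on two points it glosses over: the paper simply asserts that $\Hwin$ is closed ``as $H$ is closed'' without the limiting-strategy argument you supply, and it never explicitly checks that each $f(\Delta)$ is closed and convex (required by Kakutani), which you do verify.
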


\begin{proof}
  $X \subseteq \Hwin$, so if $X \neq \emptyset$, then 
  $\Hwin \neq \emptyset$.
 If $\Hwin \neq \emptyset$, by Kakutani's fixed point theorem, there exists a $\Delta \in \Hwin$ such that $\Delta \in f(\Delta)$ which means that there exists a $\Delta \in H$ and a one-step strategy $\tau$ with $\Delta \cdot M_\tau = \Delta$. Hence $\Delta \in X$ and $X \neq \emptyset$.
\end{proof}

One can adapt the proof of Lemma~\ref{convMDP} to obtain:

\begin{lemma}
  $X$ is a convex set.
\end{lemma}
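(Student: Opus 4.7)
The plan is to mimic the proof of Lemma~\ref{convMDP} with the stronger requirement that the new strategy fixes the convex combination rather than merely sending it into $H$. The key point, which is exactly what distinguishes MDPs from PFAs here, is that one-step MDP strategies may depend on the state, so we can mix the two witnessing strategies on a per-state basis in proportion to how much of the mass at each state originates from $x$ versus $y$.

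Concretely, given $x,y \in X$ with one-step strategies $\tau_x,\tau_y$ such that $x \cdot M_{\tau_x} = x$ and $y \cdot M_{\tau_y} = y$, and $\lambda \in [0,1]$, I would set $z = \lambda x + (1-\lambda) y$. First, $z \in H$ by convexity of $H$. Next, for every state $s_i$ with $z(s_i) > 0$, define
\[
\tau_z(\alpha, s_i) \;=\; \frac{\lambda\, x(s_i)\, \tau_x(\alpha, s_i) \;+\; (1-\lambda)\, y(s_i)\, \tau_y(\alpha, s_i)}{\lambda\, x(s_i) + (1-\lambda)\, y(s_i)},
\]
and on states where $z(s_i) = 0$ (so both $x(s_i) = y(s_i) = 0$) let $\tau_z(\cdot, s_i)$ be any arbitrary distribution on $\Sigma$. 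A routine check shows that $\sum_\alpha \tau_z(\alpha, s_i) = 1$, so $\tau_z$ is a valid MDP one-step strategy.

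The verification that $z \cdot M_{\tau_z} = z$ is then a direct linearity computation. Writing $z \cdot M_{\tau_z} = \sum_{\alpha, i} z(s_i)\, \tau_z(\alpha, s_i)\, \vec{s_i} \cdot M_\alpha$, the numerator-times-$z(s_i)$ cancellation in the definition of $\tau_z$ gives
\[
z(s_i)\, \tau_z(\alpha, s_i) \;=\; \lambda\, x(s_i)\, \tau_x(\alpha, s_i) \;+\; (1-\lambda)\, y(s_i)\, \tau_y(\alpha, s_i),
\]
and summing over $\alpha, i$ yields $z \cdot M_{\tau_z} = \lambda (x \cdot M_{\tau_x}) + (1-\lambda)(y \cdot M_{\tau_y}) = \lambda x + (1-\lambda) y = z$. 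Hence $z \in X$.

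There is no real obstacle beyond the book-keeping; the only subtle point is handling the states where $z(s_i) = 0$, which is easily resolved as above since there $x(s_i) = y(s_i) = 0$ and such states contribute nothing to $z \cdot M_{\tau_z}$ regardless of the choice of $\tau_z(\cdot, s_i)$. This per-state mixing is precisely what fails for PFAs (as illustrated by the automaton $\cA_1$ in Figure~\ref{fig:pfa}), which is why $X$ is genuinely convex only in the MDP setting.
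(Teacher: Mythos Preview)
Your proof is correct and is essentially the same as the paper's: both adapt the per-state mixing construction from Lemma~\ref{convMDP}, defining $\tau_z$ on state $s_i$ as the $\frac{\lambda x_i}{z_i}$/$\frac{(1-\lambda) y_i}{z_i}$ mixture of $\tau_x$ and $\tau_y$, and then observe that $z \cdot M_{\tau_z} = \lambda (x \cdot M_{\tau_x}) + (1-\lambda)(y \cdot M_{\tau_y}) = \lambda x + (1-\lambda) y = z$. Your version is in fact slightly more careful in explicitly handling the states with $z(s_i)=0$.
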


\medskip

For $i \leq n$, let $s_i$ be a state of the given MDP. We define the weighted outcome of the one-step strategy from $s_i$ to be the set $Im_i=\{ \lambda \vec{s_i} \cdot M_\tau \mid \lambda \in [0,1]$, and $\tau$ is a one-step strategy$\}$. Let $i \leq n$ and let $\Sigma=\{\alpha_1, \ldots, \alpha_k\}$. Further, for all $j \leq k$, let $t_i^j$ be the distributions obtained as $\vec{s_i} \cdot M_{(\alpha_j,i)}$. For all $i$, $Im_i$ is a convex set, and more precisely a bounded cone from the origin ($\vec{0} \cdot \vec{s_i}$ for any $i$) to $(t_i^j)_{j \leq k}$. We have the following lemma:

\begin{lemma}
    \label{lem:im-charac}Let $\delta$ be a sub-distribution. Then, we have $\delta \in Im_i$ iff $\exists \mu^1, \ldots, \mu^k \in [0,1]$ with $\sum_j \mu^j \leq 1$ and $\delta = \sum_j \mu^j t_i^j$.
\end{lemma}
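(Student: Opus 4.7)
The plan is a direct unfolding of the definitions of $Im_i$ and of a one-step MDP strategy, showing that the parameterization by $(\lambda,\tau)$ is equivalent to the parameterization by $(\mu^1,\ldots,\mu^k)$ up to a simple normalization.

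First I would compute $\vec{s_i}\cdot M_\tau$ for an arbitrary one-step strategy $\tau$. Since $\vec{s_i}$ is the $i$-th canonical basis vector, only the $i$-th row of $M_\tau$ contributes, and by the definition of $M_\tau = \sum_{\alpha\in\Sigma,\,j\le n}\tau(\alpha,s_j)M_{(\alpha,j)}$ together with the fact that $M_{(\alpha,j)}$ has only its $j$-th row nonzero, we get
\[
\vec{s_i}\cdot M_\tau \;=\; \sum_{j=1}^{k}\tau(\alpha_j,s_i)\,\bigl(\vec{s_i}\cdot M_{(\alpha_j,i)}\bigr) \;=\;\sum_{j=1}^{k}\tau(\alpha_j,s_i)\,t_i^j,
\]
with $\sum_j\tau(\alpha_j,s_i)=1$.

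For the forward direction, given $\delta=\lambda\,\vec{s_i}\cdot M_\tau\in Im_i$, I would set $\mu^j:=\lambda\,\tau(\alpha_j,s_i)$. Each $\mu^j\in[0,1]$, $\sum_j\mu^j=\lambda\le 1$, and the computation above gives $\delta=\sum_j\mu^j t_i^j$, as required.

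For the converse, given $\mu^1,\ldots,\mu^k\in[0,1]$ with $\lambda:=\sum_j\mu^j\le 1$, I would build an explicit one-step strategy $\tau$. If $\lambda>0$, define $\tau(\alpha_j,s_i):=\mu^j/\lambda$, and on every other state $s_\ell$ ($\ell\neq i$) choose any valid distribution over $\Sigma$ (e.g.\ $\tau(\alpha_1,s_\ell)=1$); this is a well-defined one-step MDP strategy, and by the same computation $\lambda\,\vec{s_i}\cdot M_\tau = \sum_j\mu^j t_i^j = \delta$, so $\delta\in Im_i$. If $\lambda=0$, then all $\mu^j=0$, hence $\delta=\vec 0$, which lies in $Im_i$ by taking $\lambda=0$ (with any $\tau$). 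There is no real obstacle in this proof; the only subtlety is remembering that a one-step MDP strategy must be defined on all states, which is why the $\ell\neq i$ rows are filled in arbitrarily, and that the degenerate case $\lambda=0$ must be handled separately to avoid dividing by zero.
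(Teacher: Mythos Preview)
Your proof is correct and follows essentially the same normalization argument as the paper: set $\mu^j=\lambda\,\tau(\alpha_j,s_i)$ in one direction and $\tau(\alpha_j,s_i)=\mu^j/\lambda$ in the other. You are in fact slightly more careful than the paper, which omits the degenerate case $\lambda=0$ and the remark that $\tau$ must be completed on states $s_\ell\neq s_i$.
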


Using this Lemma, we obtain the following characterization:

\begin{lemma}
\label{lem:lp}
  We have $X \neq \emptyset$, i.e., 
$\exists \lambda_1, \ldots, \lambda_n\in [0,1]$ such that: 
\begin{itemize}
  \item $\Delta= \sum_i \lambda_i \vec{s_i} \in H$
and
  \item there exists a one-step strategy $\tau$
 with $\Delta \cdot M_\tau=\Delta$.
 \end{itemize}
iff 
$\exists \lambda_1, \ldots, \lambda_n \in [0,1]$ 
and $\exists \mu^1_1, \ldots, \mu^k_n \in [0,1]$, where $k=|\Sigma|$, such that:
\begin{enumerate}
  \item[(1)] $\sum_i \lambda_i \vec{s_i} \in H$ (i.e., it satisfies the linear number of equations associated with $H$),
  \item[(2)] For all $i$, we have $\sum_j \mu_i^j =\lambda_i$, 
  \item[(3)] $\sum_{i,j} \mu_i^j t^j_i=\sum_i \lambda_i \vec{s_i}$.
\end{enumerate}
\end{lemma}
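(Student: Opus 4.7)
The plan is to prove both directions by giving an explicit correspondence between one-step strategies $\tau$ with $\Delta \cdot M_\tau = \Delta$ and solutions $(\mu_i^j)$ of the system (2)--(3), with $\mu_i^j$ playing the role of $\lambda_i \tau(\alpha_j, s_i)$. The key identity underlying everything is that, since $M_{(\alpha, i')}$ has only its $i'$-th row non-zero, one has $\vec{s_i} \cdot M_\tau = \sum_j \tau(\alpha_j, s_i) \, t_i^j$, and hence $\Delta \cdot M_\tau = \sum_{i,j} \lambda_i \tau(\alpha_j, s_i) \, t_i^j$ whenever $\Delta = \sum_i \lambda_i \vec{s_i}$. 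This is really just the definition of $M_\tau$ unpacked in the $\vec{s_i}$-basis.

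For the forward direction, assume $X \neq \emptyset$, witnessed by $\Delta = \sum_i \lambda_i \vec{s_i} \in H$ and a one-step strategy $\tau$ with $\Delta \cdot M_\tau = \Delta$. Set $\mu_i^j := \lambda_i \, \tau(\alpha_j, s_i) \in [0,1]$. Condition (1) is immediate from $\Delta \in H$; condition (2) follows from $\sum_j \tau(\alpha_j, s_i) = 1$, giving $\sum_j \mu_i^j = \lambda_i$; and condition (3) is exactly the identity $\Delta \cdot M_\tau = \Delta$ rewritten using the key identity above.

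For the backward direction, suppose $(\lambda_i)$ and $(\mu_i^j)$ satisfy (1)--(3). Let $\Delta := \sum_i \lambda_i \vec{s_i} \in H$. Define a one-step strategy $\tau$ by $\tau(\alpha_j, s_i) := \mu_i^j / \lambda_i$ whenever $\lambda_i > 0$, and by any fixed probability vector (e.g.\ $\tau(\alpha_1, s_i) = 1$) when $\lambda_i = 0$. By (2) this is a well-defined one-step MDP strategy. For states with $\lambda_i > 0$ we get $\lambda_i \tau(\alpha_j, s_i) = \mu_i^j$ directly, and for states with $\lambda_i = 0$ condition (2) forces $\mu_i^j = 0$ as well, so the equality $\lambda_i \tau(\alpha_j, s_i) = \mu_i^j$ holds uniformly. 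The key identity then gives $\Delta \cdot M_\tau = \sum_{i,j} \mu_i^j \, t_i^j$, which equals $\Delta$ by (3). Hence $\Delta \in X$.

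The only minor subtlety is handling the degenerate case $\lambda_i = 0$, where $\tau(\cdot, s_i)$ is not determined by the $\mu_i^j$; this is resolved cleanly by choosing $\tau(\cdot, s_i)$ arbitrarily, since $\vec{s_i}$ carries zero mass in $\Delta$ and thus does not affect $\Delta \cdot M_\tau$. Once the lemma is in hand, the PTIME upper bound of Theorem~\ref{th.existmdp} follows immediately: conditions (1)--(3) form a linear feasibility problem of size polynomial in $|\cA| + |H|$ (with $n + nk$ variables and $|H| + n + n$ linear constraints), decidable in polynomial time by standard linear programming.
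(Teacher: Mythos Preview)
Your proof is correct and follows essentially the same approach as the paper: both directions hinge on the correspondence $\mu_i^j \leftrightarrow \lambda_i\,\tau(\alpha_j,s_i)$. The only difference is cosmetic: in the reverse direction the paper routes through Lemma~\ref{lem:im-charac} to extract $\tau_i$ from $\delta_i = \sum_j \mu_i^j t_i^j \in Im_i$, whereas you write down $\tau(\alpha_j,s_i)=\mu_i^j/\lambda_i$ directly and handle the degenerate case $\lambda_i=0$ explicitly --- arguably cleaner, but the same argument.
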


Now, the second condition in Lemma~\ref{lem:lp} is clearly a set of linear (in)equalities and can be solved using linear programming in polynomial time. As a result we can check if $X\neq \emptyset$ in \PTIME. By Lemma~\ref{lem:hwin}, we conclude that we can check if $\Hwin\neq \emptyset$ in \PTIME.

To complete the proof of Theorem~\ref{th.existmdp}, it remains to show that this problem, i.e., existential safety for MDPs is indeed \PTIME-hard. In fact, it turns out that this is already true for MDPs with $|\Sigma|=1$, where we take the single matrix $M_\alpha$ to be the identity matrix of dimension $n$. In this case, the existential safety problem reduces to checking if the convex closed polytope $H$ is empty or not. Given a set of linear inequalities, which is how $H$ is represented to us, checking whether the set of solutions is empty is \PTIME-hard (see e.g.,~\cite[Section A.4]{pcomplete}). Hence we conclude that existential safety for MDPs is \PTIME-complete. This concludes the proof of Theorem~\ref{th.existmdp}.

\section{Existential safety for PFAs}
\newcommand{\fs}[1]{\mathsf{fs}_{#1}}
\newcommand{\fz}[1]{\mathsf{fz}_{#1}}
\newcommand{\fnz}[1]{\mathsf{fnz}_{#1}}
\newcommand{\finc}[1]{\mathsf{f}_{#1 ++}}
\newcommand{\fdec}[1]{\mathsf{f}_{#1 --}}
\newcommand{\fch}[1]{\mathsf{fch}_{#1}}

We now turn to the existential safety problem for PFAs. We will show that unlike for MDPs, this problem is undecidable with a mild relaxation on $H$. Notice that we cannot use the usual undecidability proof for reachability in PFAs, as reachability corresponds to {\em initialized} safety (given 
a distribution $\Delta$, is there a $H$-safe strategy from $\Delta$?). 
The previous section showed that existential safety for MDPs is much simpler than initialized safety (PTIME instead of being Skolem-hard, even in the unary case where there is a single action \cite{ipl-15}), so one might have expected an improvement for PFAs as well. 

We show that this is not the case.
Inspired by~\cite{blondel2001deciding}, we perform a reduction from the \emph{universal} halting problem for 2-counter machines, which is undecidable (and even $\Pi^0_2$-complete), granted that two dimensions of the convex polytope $H$ can be open rather than closed.

\begin{theorem}
\label{thm:undec}
  The existential safety problem for PFA is undecidable for convex polytopes having open and closed boundaries.
\end{theorem}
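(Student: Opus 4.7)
The plan is to reduce from the universal halting problem for deterministic 2-counter Minsky machines, which is $\Pi^0_2$-complete and hence undecidable. Given such a machine $M$ with control states $Q$, a halting state $q_h$, and instructions that increment, decrement, and zero-test two counters $c_1, c_2$, I would construct a PFA $\cA_M$ and a convex polytope $H_M$ (having a few open faces) such that $\Hwin^{\cA_M}\neq\emptyset$ iff there exists some initial counter valuation on which $M$ never halts. Undecidability of existential safety for PFAs then follows immediately, since the latter property is $\Sigma^0_2$-complete.

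The encoding step introduces one PFA state $p_q$ for each $q\in Q$, two counter-tracking states $u_1, u_2$, and a sink $\bot$, and represents a configuration $(q,c_1,c_2)$ by the distribution that puts mass $m$ on $p_q$, mass $m\alpha^{c_1}$ on $u_1$, mass $m\beta^{c_2}$ on $u_2$ (for fixed constants $\alpha,\beta\in(0,1)$), and the remainder on $\bot$. The alphabet has one action per Minsky instruction, and the stochastic matrix for \emph{execute $\iota$ at $q$} is designed so that it maps the encoding of $(q,c_1,c_2)$ to the encoding of the successor configuration under $\iota$; increments/decrements multiply the relevant tracking probability by $\alpha^{\pm 1}$ or $\beta^{\pm 1}$, with leftover mass routed through $\bot$ to keep the matrix stochastic. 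The polytope $H_M$ is then cut out by closed linear (in)equalities forcing the distribution to be a valid non-halted encoding — mass concentrated on a single $p_q$, consistent ratios between $p_q$ and the tracking states, and strictly positive closed lower bounds on $P(u_i)$ preventing counters from running `negative' — together with a single \emph{open} strict inequality $P(p_{q_h})<m$ forbidding the halting control state.

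Correctness would rest on two points. Since $M$ is deterministic, a chosen initial encoding of $(q_0,c_1,c_2)$ uniquely determines the only action sequence that could possibly stay in $H_M$, namely the one simulating $M$ on input $(c_1,c_2)$; any deviating action is punished by some face of $H_M$ within a bounded number of steps. That sequence remains in $H_M$ forever iff $M$ never reaches $q_h$ on $(c_1,c_2)$, which gives the desired equivalence. The main obstacle — and the reason open boundaries are necessary — is the encoding of zero-tests: because PFA strategies cannot depend on the current state, the strategy must commit to a $c_i=0$ branch or $c_i>0$ branch without inspecting the current distribution, and the polytope must reject wrong commitments. A strict inequality is what lets $H_M$ separate the limiting configuration where a counter equals zero (which sits on a closed boundary of the encoding region) from configurations produced by a wrong branch, \emph{without} also excluding legitimate zero-counter encodings from $H_M$. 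Designing the action matrices so that every incorrect branch choice violates either the open face at $p_{q_h}$ or one of the closed tracking-bound faces within one or two steps — while correct branches preserve the encoding exactly — is where the bulk of the technical work lies.
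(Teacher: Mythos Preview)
Your high-level plan matches the paper's: both reduce from the \emph{universal} halting problem for 2-counter machines and encode counter values as exponentially-scaled probability mass. But the construction you sketch has two structural gaps that would prevent it from going through.

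\textbf{Convexity of the control-state encoding.} You use one PFA state $p_q$ per control state and require that $H_M$ force ``mass concentrated on a single $p_q$''. That is not a convex constraint: the set of distributions supported on exactly one $p_q$ is a finite union of faces, and its convex hull contains mixtures that encode no configuration. If $H_M$ is relaxed to the hull, then convex combinations of two valid encodings lie in $H_M$ and may admit safe strategies even when the machine halts everywhere, so the reduction breaks. The paper avoids this by encoding the program counter in a \emph{single} state $P$ with mass $\frac{i}{1000n}$ for $pc=i$, and then uses auxiliary ``verification'' states $CP_i,CQ_i$ together with linear equalities of $H$ to force that any one-step strategy staying in $H$ must play only actions $\alpha_i,\beta_i$ corresponding to the current $i$. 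You have no analogue of this gadgetry, and ``any deviating action is punished within a bounded number of steps'' is exactly the hard part that these verification states are there to implement.

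\textbf{Where the open boundaries are actually needed.} You place the strict inequality at the halting state ($P(p_{q_h})<m$) and motivate openness by the zero test. In the paper's construction the zero test is handled entirely with \emph{closed} linear constraints (comparing masses sent to auxiliary states $CA,CB,CY,CZ$), and there is no distinguished halting state at all. The strict inequalities are $\Delta(C),\Delta(D)>0$, and their role is different: they exclude the limit point where a counter has ``value $\infty$'' (mass $0$). Without that exclusion, a distribution with $\Delta(C)=0$ could sit in $H$ and admit a trivial safe strategy regardless of whether $M$ halts on all genuine inputs, again breaking the reduction. Your ``strictly positive closed lower bounds on $P(u_i)$'' would instead cap the counter at a finite value (since $P(u_i)\ge\varepsilon$ forces $c_i\le\log_\alpha(\varepsilon/m)$), making the reachable configuration space finite and the problem decidable.
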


The rest of this section will be devoted to the proof of the above theorem. Let $CM$ be a 2-counter machine, with two counters $c,d$. We want to know whether $CM$ terminates on all inputs. Let $pc$ the program counter, with possible values $\{1,\ldots,n\}$ which is either an increment operation on a counter or a combined zero-test and decrement operation of the form: if $c=0$ then go to $s$, else decrement $c$ and go to $t$.

We will define a PFA $\cA$ and a polytope $H$, such that $CM$ halts for all inputs iff the existential safety is not true, i.e., there exists no $\Delta \in H$ such that there is a $H$-safe strategy for $\cA$ from $\Delta$. The main idea is to encode a counter value as the probability mass in a specific state. Then, when the counter is incremented (or decremented), a ``correct'' choice of actions will result in the probability mass in that state changing appropriately to encode the incremented (or decremented) counter value. If this correct choice of actions is not taken, then we ensure that the resulting distribution must go outside $H$ and hence is not $H$-safe. Thus, for any terminating computation of $CM$, no (correct or faulty) simulation of $\cA$ will be $H$-safe. On the other hand, a non-terminating computation of $CM$ from some initial state will result in a $H$-safe strategy from a corresponding initial distribution iff the simulation is correct. Let us now formalize this construction:

\paragraph*{States of the PFA}
\begin{itemize}
\item \emph{(counter value states)} We have two states $C,D$ encoding the two counters $c, d$ respectively. The counter value $c=j \geq 0$  (resp. for $d$) will be encoded as a probability mass of $\frac{1}{1000 \cdot 2^j}$ being in $C$ (resp. $D$). We take this value to be very small, since we want to be able to encode increment and decrement of these states using actions, and for this we need to transfer probability mass from other states. Hence we want this to be small enough to be ensured that there will be some other state (in particular the state $T$ below, from which this probability can be transfered).
 \item \emph{(program counter state)} The state $P$ will encode the program counter, with $pc=i$ for $1\leq i\leq n$ being encoded as probability mass of $\frac{i}{1000 n}$ in $P$  (values that are not a valid encoding will immediately lead $\cA$ out of $H$),
 \item \emph{(special states)} $S, T$ are two special states. $S$ is a \emph{stable} state, which will always have probability mass $\frac{1}{10}$ in it and $T$ is a \emph{trash} state which collects all the remaining probability,
 \item \emph{(verification states)} These states are used to ensure that the above states behave as they should, i.e., the probability mass in them is as specified. More precisely, we have:
   \begin{itemize}
     \item For each $1\leq i\leq n$, we have $CP_i,CQ_i$ to check the program counter $P$ encodes $pc=i$.
     \item $CA,CB,CX,CY,CZ$ (and similarly $DA,DB,DX,DY,DZ$) to check that the zero test evaluates to true or false for $C$ (resp. $D$),
     \item $XC,XD$ to check that the new value of $C$ and $D$ are as expected.
   \end{itemize}
\end{itemize}

\paragraph*{Defining the polytope $H$}
We design the polytope $H$ by specifying $\Delta \in H$ iff the following hold:
\begin{enumerate}[nosep,label=$(\mathsf{h}{\arabic*})$,ref=$(\mathsf{h}{\arabic*})$]
 \item\label{item:h1} $\Delta(S) = \frac{1}{10}$ (probability mass at $S$ is exactly $\frac{1}{10}$)
\item\label{item:h2} $\Delta(C),\Delta(D) \in (0,\frac{1}{1000}]$ and
$\Delta(P),
\Delta(CA),
\Delta(DA)
\in [0,\frac{1}{1000}]$,
 \item\label{item:h3} $\sum_{i=1}^n \Delta(CQ_i)=\frac{1}{100000n}$,
 \item\label{item:h4} $\Delta(CP_i) = \Delta(CQ_i)$ for all $i$,
 \item\label{item:h5} $\Delta(CY) \leq \Delta(CA)$ and
$\Delta(CB) = \Delta(CZ)$,
 and similarly for $DA,DB,DY,DZ$,
 \item\label{item:h6} $\Delta(XC)=\Delta(CX) + \Delta(CY) + \Delta(CZ) \in [0,\frac{1}{2000}]$ and  similarly for $XD$.
\end{enumerate}
Note indeed that the above can be defined as an intersection of half-spaces, using inequalities and further, the space defined is convex.

\paragraph*{Actions and Transitions of the PFA}

From a distribution $\Delta \in H$,
assume that there exists a one-step strategy $\tau$ such that $\Delta_2 = \Delta \cdot M_\tau \in H$.
We will make sure that there is at most one such $\tau$.
Recall that $\tau(\alpha)$ represents the proportion of action $\alpha$ which will be played by the strategy (from every state of the PFAs). We will call this weight of action $\alpha$. Further, in what follows, we say an action $\alpha$ sends $p$ of the mass of state $s$ to state $s'$, to mean that from state $s$ there is a transition labeled  $\alpha$ to $s'$ with probability $p$. When probability $p$ is 1, we just say that the action sends the mass of state $s$ to $s'$. 

$\cA$ has (at most) $2n+4$ actions: 

\begin{itemize}
  \item \emph{Action $\iota$} sends the mass of every state to state $T$.
  It will be used to make the sum of weights of actions add up to 1.
  (That is, from each state, there is a transition labeled $\iota$ 
  to $T$, with probability 1.)

\item  \emph{Action $\delta$} sends the mass of every state to state $T$, except for $T$ which is fully sent to $S$.
It will be used to replenish the stable state $S$ (to ensure it has a probability mass of $\frac{1}{10}$ after every step),

\item \emph{Action $\delta_C$} sends the mass of every state to $T$ except for $S$, for which it sends $\frac{1}{40}$  of the mass to $XC$, $\frac{1}{2}$ to $C$ and the rest to $T$. Action $\delta_D$ is similar, replacing $C,XC$ by $D,XD$. They will ensure that 
the probability mass in
$C,D$ encode correct counter values.

\item There are at most 2 actions $\alpha_i,\beta_i$ per program counter
$pc=i$: one action $\alpha_i$ for increment and two actions $\alpha_i,\beta_i$ for decrement/zero test. We detail the action $\alpha_i$ encoding the instruction, $pc=i:$ $c\geq 1$, decrement $c$ and goto $j$:
\begin{enumerate}
 \item Send $\frac{1}{10 i}$ of the mass of $P$ into  $CP_i$,  and the rest into $T$,
 \item Send all the mass of $C$ into  $CY$, 
  \item Send $\frac{1}{2}$ of the mass of $D$ into  $DX$, and the rest to $T$,
 \item Send $\frac{1}{1000n}$ of the mass of $S$ into $CQ_i$, $\frac{1}{200}$ of the mass of $S$ into  $CA$, and send $\frac{j}{10 n}$ of the mass of $S$ into 
 $P$, and the rest into $T$,
 \item Send all the mass of the rest into $T$.
\end{enumerate}

This is the only action with $\beta_i$ which sends mass to $CP_i, CQ_i$.
Assuming $\Delta(P)=\frac{i}{1000n}$ ($pc=i$), 
because of $(h4),1$ and $4$, only $\alpha_i,\beta_i$ can have positive weight, because we have for all $j$,
$\Delta_2(CP_j)=\Delta(P)\frac{\tau(\alpha_i)+\tau(\beta_i)}{10 j \cdot n} = 
\Delta_2(CQ_j)=\frac{\tau(\alpha_i)+\tau(\beta_i)}{10000 \cdot n}
$, that is 
$\Delta(P)=\frac{j}{1000n}$ for $\tau(\alpha_j)+\tau(\beta_j) \neq 0$.
That is, $\tau(\beta_j)=\tau(\alpha_j)=0$ for all $j \neq i$.
Further, $\tau(\alpha_i)+\tau(\beta_i)=\frac{1}{10}$ thanks to Condition~\ref{item:h3}.

Assuming that $\Delta(C) \leq \frac{1}{2000}$ ($c\geq 1$),
because $\beta_i$ sends $\frac{1}{1000}$ into $CB$ and 
$\Delta(C)$ into $CZ$, we must have $\tau(\beta_i)=0$
to ensure (h5) $\Delta_2(CB)=\Delta_2(CZ)$.

Thus $\tau(\alpha_i)=\frac{1}{10}$.
Further, $\Delta_2(CY)=\frac{\Delta(C)}{10}$ through $\tau(\alpha_i)=\frac{1}{10}$. 
By Condition~\ref{item:h6}, the same mass must enter in $XC$ as 
$\Delta_2(CX)=\Delta_2(CZ)=0$. 
Hence $\tau(\delta_c)/400 = \Delta(C)/10$ which means $\tau(\delta_c) = 40\Delta(C)$. So the mass entering $C$ through $\tau(\delta_C)$ is $ 40\Delta(C) * 1/20 = 2 \Delta(C)$ which is equivalent to $c$ being decremented. 
In the same way, we can observe that the mass in counter $d$ remains unchanged through $\delta_D$.

 \item \emph{Action $\beta_i$ coding $pc=i:$ $c=0$ and goto $j$} is as follows:
\begin{enumerate}
 \item Send $\frac{1}{10 i}$ of the mass of $P$ into  $CP_i$,  and the rest into $T$,
 \item Send $\frac{1}{2}$ of the mass of $C$ into  $CZ$,  and the rest into $T$,
 \item Send $\frac{1}{2}$ of the mass of $D$ into  $DX$, and the rest to $T$,
 \item Send $\frac{1}{1000n}$ of the mass of $S$ into $CQ_i$,  $\frac{1}{200}$ of the mass of $S$ into  $CB$,  and send $\frac{j}{10 n}$ of the mass of $S$ into  $P$, and the rest into $T$,
 \item Send all the mass of the rest into $T$.
\end{enumerate}

As above, we have $\tau(\alpha_i)+\tau(\beta_i)=\frac{1}{10}$.
Assuming that $\Delta(C) = \frac{1}{1000}$ ($c=0$),
because $\alpha_i$ sends $\frac{\tau(\alpha_i)}{2000}$ into $CA$ and 
$\tau(\alpha_i) \cdot \Delta(C)=\frac{\tau(\alpha_i)}{1000}$ into $CY$, we must have $\tau(\alpha_i)=0$
to ensure (h5) $\Delta_2(CY)=\Delta_2(CA)$.

Hence $\tau(\beta_i)=\frac{1}{10}$. Thus,
$\Delta(C)/20$ enters $CZ$. 
By Condition~\ref{item:h6}, the same mass must enter in $XC$ as 
$\Delta_2(CX)=\Delta_2(CY)=0$. 
Hence $\tau(\delta_c)/400 = \Delta(C)/20$ which means $\tau(\delta_c) = 20\Delta(C)$. So the mass entering $C$ through $\tau(\delta_C)$ is $ 20\Delta(C) * 1/20 = \Delta(C)$ which is equivalent to $c$ 
staying at $\frac{1}{1000}$, 
that is the counter $c$ stays at $c=0$.
In the same way, we can observe that the mass in counter $d$ remains unchanged through $\delta_D$.

\item \emph{Action $\alpha_i$ encoding $pc=i:$ increment $c$ and goto $j$} is as follows:
\begin{enumerate}
 \item Send $\frac{1}{10 i}$ of the mass of $P$ into  $CP_i$,  and the rest into $T$,
 \item Send $\frac{1}{4}$ of the mass of $C$ into  $CX$,  and the rest into $T$, 
 
 \item Send $\frac{1}{2}$ of the mass of $D$ into  $DX$, and the rest to $T$,

 \item Send $\frac{1}{1000n}$ of the mass of $S$ into $CQ_i$, and send $\frac{j}{10 n}$ of the mass of $S$ into  $P$, and the rest into $T$,

 \item Send all the mass of the rest into $T$.
\end{enumerate}

This is the only action ($\beta_i$ does not exists as this is an increment)
which sends mass to $CP_i, CQ_i$.
Assuming 
$\Delta(P)=\frac{i}{1000n}$ ($pc=i$), 
because of $(h4),1$ and $4$, only this action can have positive weight,
that is $\tau(\beta_j)=\tau(\alpha_j)=0$ for all $j \neq i$.
Further, $\tau(\alpha_i)=\frac{1}{10}$ thanks to Condition~\ref{item:h3}.

Further, $\Delta(C)/40$ enters $CX$ through $\tau(\alpha_i)$. By Condition~\ref{item:h6}, the same mass must enter in $XC$ as 
$\Delta_2(CY)=\Delta_2(CZ)=0$. 
Hence $\tau(\delta_c)/400 = \Delta(C)/40$ which means $\tau(\delta_c) = 10\Delta(C)$. So the mass entering $C$ through $\tau(\delta_C)$ is $ 10\Delta(C) * 1/20 = \Delta(C)/2$ which is equivalent of $c$ being incremented. 
In the same way, we can observe that the mass in counter $d$ remains unchanged through $\delta_D$.
\end{itemize}

We obtain a correct simulation from distributions corresponding to configurations of the 2-counter machine. 
In particular, there exists a safe strategy from this distribution iff
the computation from the corresponding configuration is not halting.
We obtain that the PFA is existentially safe iff $M$ is not 
universally hatling.

Notice that $(h2)$ has some strict inequalities, asking $\Delta(C),\Delta(D) >0$. This is to avoid considering 
configurations with infinite counters, from which 
there may exist a non-halting computation.

\section{Universal safety for MDPs}

In this section, we prove that universal safety is decidable for MDPs. Further, we provide tight complexity bounds:

\begin{theorem}
\label{th.universal.mdp}
The universal safety problem for MDPs is \coNP-complete.
\end{theorem}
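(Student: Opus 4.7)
The plan is to prove the two halves of \coNP-completeness separately, with the upper bound falling out of LP duality and the lower bound requiring a more delicate reduction. For the upper bound, I would first show, using Lemma~\ref{lemma1}, that $H = \Hwin^\cA$ holds iff for every $\Delta \in H$ there exists a one-step strategy $\tau$ with $\Delta \cdot M_\tau \in H$. The forward direction is immediate from Lemma~\ref{lemma1} applied to $\Hwin^\cA = H$; the converse follows by inductively picking such a $\tau$ at each step to assemble an $H$-safe strategy from any starting point. Consequently, the negation of universal safety is the $\exists\forall$-statement: there exists $\Delta \in H$ such that every one-step strategy $\tau$ takes $\Delta$ outside $H$.

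To place this $\exists\forall$-statement in \NP{}, I would use Lemma~\ref{lem:im-charac} to re-express the inner universal quantifier as infeasibility of a linear program in variables $\mu_i^j$ parameterized by $\Delta$: the constraints are $\mu_i^j\geq 0$, $\sum_j \mu_i^j = \lambda_i$, and $\sum_{i,j} \mu_i^j\, t_i^j \in H$. Farkas' lemma then witnesses infeasibility by a dual certificate $(y,z)$ with $z \geq 0$, $y_i + z \cdot (A t_i^j) \geq 0$ for all $i,j$, and a strict inequality $y\cdot\Delta + z\cdot b < 0$. The cone of valid $(y,z)$ is polyhedral, and the whole certificate is homogeneous in $(y,z)$, so I can normalize onto a bounded polytope whose extreme points have polynomial bit size. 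A polynomial-size $(y,z)$ can thus be guessed nondeterministically, and for a fixed $(y,z)$ the remaining question ``does there exist $\Delta \in H$ with $y\cdot\Delta + z\cdot b < 0$?'' is a single LP solvable in polynomial time. This yields an \NP{} algorithm for the negation and therefore the \coNP{} upper bound---essentially the standard $\forall\exists$-quantified linear programming argument.

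For \coNP-hardness, my plan is a reduction from a canonical \coNP-complete problem such as DNF-tautology, equivalently the complement of 3-SAT. The rough idea is to let distinguished coordinates of a candidate $\Delta \in H$ encode a Boolean assignment to the formula's variables, while the remaining structure of $H$ and the MDP actions together enforce that (i) only distributions that look like a Boolean assignment are viable candidates, and (ii) there is a one-step strategy keeping $\Delta$ in $H$ iff the encoded assignment satisfies every clause; negating, a bad $\Delta$ encodes a falsifying assignment. The MDP's actions are to be designed so that a state-dependent one-step strategy must discharge all clauses simultaneously, ruling out any ``mixing'' of choices that could salvage an unsatisfying assignment.

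The main obstacle I expect is this hardness reduction. Because $H$ is convex and Proposition~\ref{prop:convex} shows $\Hwin^\cA$ is also convex, any convex combination of two assignment-encoding distributions is again in $H$, which threatens to collapse the Boolean structure of a naive encoding. The gadget must therefore prevent state-dependent one-step strategies from exploiting fractional mixtures to cover clause obligations they cannot meet jointly, and it must pin the witnessing $\Delta$ down to a face of $H$ corresponding to a genuine assignment. Proving correctness of both directions of this reduction is where the technical effort lies; by contrast, the upper bound is a clean application of LP duality once the $\forall\exists$-reformulation is in place.
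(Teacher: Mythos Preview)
Your upper bound is correct and takes a more self-contained route than the paper. The paper reformulates the one-step criterion as a quantified linear implication in the class $\mathrm{QLI}(1,\forall,\mathbf{B})$ and then invokes a known \coNP\ upper bound for that class as a black box. Your direct Farkas argument---guess a polynomial-size extreme ray $(y,z)$ of the dual cone $\{(y,z): z\ge 0,\ y_i + z^\top (A t_i^j) \ge 0\ \forall i,j\}$, then solve a single LP over $\Delta\in H$ to test $y^\top\Delta + z^\top b < 0$---is essentially what that black-box result unpacks to, and is arguably cleaner for this specific instance. Both approaches rest on the same one-step characterization (stated in the paper as Lemma~\ref{lem:univ-one-step}).

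Your lower-bound sketch has the gadget direction inverted. You write that a one-step strategy should exist iff the encoded assignment \emph{satisfies every clause}; for a 3-CNF formula $\Phi$ this makes universal safety equivalent to $\Phi$ being a tautology, which is in \PTIME\ (a CNF is a tautology iff every clause contains a complementary pair). The paper's reduction---also from the complement of 3-CNFSAT---goes the other way: from $\Delta_v$ encoding valuation $v$, a one-step strategy staying in $H$ exists iff some clause $c_j$ is \emph{falsified} by $v$, and the strategy's role is precisely to \emph{select} such a $j$. The constraints of $H$ force the chosen $j$ to be uniform across all variable-gadgets and forbid choosing any $j$ with a literal true under $v$. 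Universal safety then holds iff every valuation falsifies some clause, i.e., $\Phi$ is unsatisfiable.

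On your convexity concern: the paper does not pin $\Delta$ to a face of $H$. Instead, for an arbitrary $\Delta\in H$ it \emph{rounds} the variable-coordinates to a valuation $v$ (thresholding each $\Delta(X_i)$ against $\tfrac{1}{20m}$) and shows that the strategy $\tau_{v,j}$ built for a clause $c_j$ false under $v$ still keeps $\Delta$---not just $\Delta_v$---inside $H$; the constraint $\Delta(X_i)-20\,\Delta(Y_i)\in[0,\tfrac{1}{20m}]$ is deliberately slack enough to absorb the fractional part. So the obstacle you anticipate is real, but its resolution is robustness of the one-step strategy to rounding, not restricting the witnessing $\Delta$ to extreme points.
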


Our first step is to express universal safety as a property on the one-step strategies. 

\begin{lemma}
  \label{lem:univ-one-step}
Let $M$ be an MDP and $H$ a convex polytope. Then $H = \Hwin$ iff  for any distribution $\Delta$ in $H$, there exists a one-step strategy $\tau$ (of the MDP) which sends in $H$, that is $\Delta \cdot M_\tau \in H$.
\end{lemma}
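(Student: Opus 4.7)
The plan is to prove the two implications separately. The forward direction $(\Rightarrow)$ is essentially immediate from Lemma~\ref{lemma1}: if $H = \Hwin$, then every $\Delta \in H$ lies in $\Hwin$, and Lemma~\ref{lemma1} asserts that for any such $\Delta$ there is a one-step strategy $\tau$ with $\Delta \cdot M_\tau \in \Hwin = H$, which is exactly what is required.

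The substantive direction is the converse $(\Leftarrow)$. I would build an $H$-safe strategy from any starting point in $H$ by iterating the one-step witnesses. Concretely, suppose that for every $\Delta \in H$ there exists a one-step strategy $\tau_\Delta$ with $\Delta \cdot M_{\tau_\Delta} \in H$, and fix such a choice $\Delta \mapsto \tau_\Delta$. Starting from an arbitrary $\Delta_0 \in H$, define inductively $\Delta_{i+1} = \Delta_i \cdot M_{\tau_{\Delta_i}}$, and let $\sigma = \tau_{\Delta_0}\, \tau_{\Delta_1}\, \tau_{\Delta_2}\ldots$. By hypothesis and induction on $i$, each $\Delta_i$ lies in $H$, and by construction $\Delta^\sigma_m = \Delta_m$ for every $m$. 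Hence $\sigma$ is $H$-safe from $\Delta_0$, so $\Delta_0 \in \Hwin$. This shows $H \subseteq \Hwin$; the reverse inclusion $\Hwin \subseteq H$ holds by definition, giving $H = \Hwin$.

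This lemma is really a routine iteration argument, so I do not expect any serious obstacle; the only point to note is that the construction uses a choice of one-step witness for each $\Delta \in H$, which is unproblematic since we are only proving an existence statement (no effectivity is claimed here). The value of the lemma is precisely that it reduces universal safety, an infinite-horizon property, to a one-step condition quantified over all distributions in $H$, which is exactly the form required to obtain a Quantified Linear Programming encoding and hence the \coNP\ upper bound advertised in Theorem~\ref{th.universal.mdp}.
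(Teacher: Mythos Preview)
Your proof is correct and follows essentially the same approach as the paper: the paper also constructs the $H$-safe strategy by iterating the one-step witnesses $\tau_{\Delta_i}$ and proves by induction that each $\Delta_i$ stays in $H$, then concludes $H \subseteq \Hwin \subseteq H$. The only cosmetic difference is that for the forward direction the paper observes directly that the first step of a safe strategy lands in $H$, whereas you route through Lemma~\ref{lemma1}; both are immediate.
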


\begin{proof}
  If for each distribution $\Delta \in H$,
  there exists such a one-step strategy $\tau_\Delta$, then
  one can extend it to a distribution-based strategy 
  playing $\tau_\Delta$ when in $\Delta$.
That is, for each $\Delta \in H$, it suffices to play 
the strategy $\sigma$ defined inductively by
$\sigma(1)=\tau_\Delta$ and 
$\sigma(n+1)= \tau_{\Delta_n}$
with $\Delta_n=\Delta \cdot M_{\sigma(1)} \cdots M_{\sigma(n)}$.
We prove trivially by induction that $\Delta_n \in H$, and thus 
$\tau_{\Delta_n}$ is well defined and $\Delta_{n+1} \in H$. Thus, $\sigma$ is a $H$-safe strategy from $\Delta$.
Thus $H \subseteq \Hwin$. But by definition we know that $\Hwin\subseteq H$, which implies that $H=\Hwin$.

Conversely, if $H=\Hwin$, then for all $\Delta\in H$ we have $\Delta \in H_{win}$. Thus there is a strategy staying forever in $H$ from any $\Delta\in H$, and in particular a one-step strategy staying in $H$.
\end{proof}

\subsection{A co-NP upper bound for universal safety in MDPs}
Our goal is to check the characterization in Lemma~\ref{lem:univ-one-step} by encoding it as a \emph{quantified} linear program and exploiting  advances and the state-of-the-art results in the theory of linear arithmetic and linear inequalities~\cite{amai14,amai17}. For this we first obtain another intermediate characterization, which brings us closer to our goal. We reuse the notation $(t_i^j)$ of Section \ref{sec:mdp}, defined as the distributions $\vec{s_i} \cdot M_{(\alpha_j,i)}$.

\begin{lemma}Let $\cA$ be an MDP, with set of states $S$ and actions $\Sigma$, where $k=|\Sigma|$, $n=|S|$. Let $H$ be a convex set. Then the following are equivalent:
  \begin{enumerate}
  \item[(P1)] $H=\Hwin$
  \item[(P2)] for all distributions $\Delta\in H$, there exists a one-step strategy  $\tau$ such that $\Delta \in H$ implies that $\Delta \cdot M_\tau \in H$
  \item[(P3)] for all $\lambda_1, \ldots \lambda_n \in [0,1]$, there exists $\mu_1^1,\ldots, \mu_n^k \in [0,1]$, such that $\sum_i \lambda_i \vec{s_i} \in H$ (it satisfies the linear number of inequalities
associated with $H$) implies that:  
\begin{enumerate}
  \item For all $i$, we have $\sum_j \mu_i^j =\lambda_i$, 
  \item $\sum_{i,j} \mu_i^j t^j_i \in H$ (it satisfies the linear number of inequalities associated with $H$).
\end{enumerate}
\end{enumerate}
\end{lemma}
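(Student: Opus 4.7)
My plan: the equivalence (P1) $\Leftrightarrow$ (P2) is exactly Lemma~\ref{lem:univ-one-step}, so the real content lies in (P2) $\Leftrightarrow$ (P3). I would handle this by exhibiting the natural change of variables between a (distribution, one-step strategy) pair on one side and the coefficient vectors $(\lambda_i)_{i}, (\mu_i^j)_{i,j}$ on the other, via $\lambda_i = \Delta(i)$ and $\mu_i^j = \lambda_i\,\tau(\alpha_j, s_i)$.

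For (P2) $\Rightarrow$ (P3), I would fix arbitrary $\lambda_1,\dots,\lambda_n \in [0,1]$ with $\Delta := \sum_i \lambda_i \vec{s_i} \in H$, invoke (P2) on $\Delta$ to obtain a one-step strategy $\tau$ with $\Delta \cdot M_\tau \in H$, and then define $\mu_i^j := \lambda_i\,\tau(\alpha_j, s_i)$. Condition (a) is then immediate from $\sum_j \tau(\alpha_j, s_i) = 1$. For condition (b), I would unfold the definition $M_\tau = \sum_{\alpha_j, i} \tau(\alpha_j, s_i)\, M_{(\alpha_j,i)}$ and use $\vec{s_i} \cdot M_{(\alpha_j,i)} = t_i^j$ to get $\Delta \cdot M_\tau = \sum_{i,j} \lambda_i\,\tau(\alpha_j,s_i)\,t_i^j = \sum_{i,j} \mu_i^j\, t_i^j$, which lies in $H$ by the choice of $\tau$.

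For (P3) $\Rightarrow$ (P2), given $\Delta \in H$, I would set $\lambda_i := \Delta(i)$ so that $\Delta = \sum_i \lambda_i \vec{s_i} \in H$, and apply (P3) to obtain coefficients $\mu_i^j$. I then reverse the change of variables, defining $\tau(\alpha_j, s_i) := \mu_i^j / \lambda_i$ when $\lambda_i > 0$, and $\tau(\alpha_1, s_i) := 1$ (any fixed convention) when $\lambda_i = 0$. The inequality $\mu_i^j \leq \sum_{j'} \mu_i^{j'} = \lambda_i$ together with condition (a) ensures $\tau(\alpha_j,s_i) \in [0,1]$ and $\sum_j \tau(\alpha_j, s_i) = 1$, so $\tau$ is a valid one-step MDP strategy, and the same rearrangement yields $\Delta \cdot M_\tau = \sum_{i,j}\mu_i^j\, t_i^j \in H$ by (b).

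The only point requiring care will be the boundary case $\lambda_i = 0$ in the reverse direction, but it is harmless: condition (a) then forces $\mu_i^j = 0$ for every $j$, so state $s_i$ contributes nothing to $\Delta \cdot M_\tau$ regardless of the arbitrary value chosen for $\tau$ at $s_i$. There is no real obstacle here; (P3) is simply a bilinear reformulation of (P2) that exposes the individual probability coefficients $\lambda_i$ and $\mu_i^j$, which is exactly what will allow us to feed the statement to the quantified linear programming machinery in the next step.
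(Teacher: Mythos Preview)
Your proposal is correct and matches the paper's proof essentially line for line: both reduce (P1)$\Leftrightarrow$(P2) to Lemma~\ref{lem:univ-one-step}, and for (P2)$\Leftrightarrow$(P3) both use the same change of variables $\mu_i^j = \lambda_i\,\tau(\alpha_j,s_i)$ in one direction and $\tau(\alpha_j,s_i) = \mu_i^j/\lambda_i$ in the other. Your handling of the degenerate case $\lambda_i=0$ is in fact slightly cleaner than the paper's (which sets $\tau(\alpha_j,s_i)=0$ there, not summing to~$1$), but the argument is otherwise identical.
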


\begin{proof}
The statement (P1) iff (P2) follows from~Lemma~\ref{lem:univ-one-step}.

Now we prove (P2) iff (P3). Recall that $Im_i$ (see Section \ref{sec:mdp}) is the weighted outcome of one-step strategy from $\vec{s_i}$, denoted as $Im_i=\{ \lambda \vec{s_i} \cdot M_\tau \mid  \lambda \in [0,1]$, and $\tau$ is a one-step strategy $ \}$.
The proof follows ideas of Lemmas~\ref{lem:im-charac},~\ref{lem:lp}.
Assume (P2).
Let $(\lambda_i)_{i \leq n}$ such that
$\Delta = \sum_i \lambda_i \vec{s_i} \in H$.
Thus there exists a $\tau$ with $\Delta \cdot M_\tau \in H$.
Let $\nu_i^j= \tau(\alpha_j,\vec{s_i})$.
We have $\Delta \cdot M_\tau = \sum_{i,j} \lambda_i \nu_i^j t_i^j \in H$. For all $i,j$, choosing $\mu_i^j= \lambda_i \nu_i^j$ satisfies a and b. Hence (P3) is true.

Assume (P3). Let $\Delta = \sum_i \lambda_i \vec{s_i} \in H$.
It suffices to consider $\tau$ such that 
$\tau(\alpha_j,\vec{s_i})=\frac{\mu_i^j}{\lambda_i}$ for $\lambda_i >0$
and $\tau(\alpha_j,\vec{s_i})=0$ otherwise to prove (P2).
\end{proof}

Now, we observe that $(P3)$ is a \emph{quantified linear implication (QLI)}, i.e.,  a conjunction of implications of inequalities over real numbers of the form:
\[{\bf \exists x_1\forall y_1\ldots \exists x_n\forall y_n [A\cdot x +N\cdot y\leq b \rightarrow C\cdot x +M\cdot y\leq d]}\]
  where, {$\bf A,N,C,M$} are matrices and {$\bf x, y,b,d$} are vectors partitioned respectively as $x_1,\ldots x_n$ and $y_1,\ldots, y_n$. The decidability of solving (checking existence of a solution for) such QLI's with an arbitrary quantifier alternation is known to be \PSPACE-hard~\cite{amai14}. But it turns out that our specific problem has a better structure which allows us to use recently proved results in~\cite{amai17} and show the following:

\begin{proposition}
Solving the quantified linear implication $(P3)$ can be done in \coNP.
\end{proposition}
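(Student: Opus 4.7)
The plan is to recognize $(P3)$ as a quantified linear implication (QLI) with a single quantifier alternation, of the form
$\forall \lambda \, \exists \mu \, [\Phi(\lambda) \rightarrow \Psi(\lambda, \mu)]$,
where $\Phi(\lambda)$ is the linear system encoding $\sum_i \lambda_i \vec{s_i} \in H$ together with $\lambda_i \in [0,1]$, and $\Psi(\lambda, \mu)$ consists of the box constraints $\mu_i^j \in [0,1]$, the linear equalities $\sum_j \mu_i^j = \lambda_i$, and the inequalities defining $\sum_{i,j} \mu_i^j t_i^j \in H$. All of these systems have polynomial bit-size in the inputs $\cA$ and $H$. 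My goal is to show the negation of $(P3)$ lies in $\textsf{NP}$, yielding the \coNP\ upper bound.

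First I would negate $(P3)$ to obtain: there exists $\lambda$ with $\sum_i \lambda_i \vec{s_i} \in H$ such that for no $\mu$ is $\Psi(\lambda, \mu)$ satisfied. A priori "image not in $H$" is a disjunction over the facets of $H$, but the $\forall \mu$ turns this disjunction into a single emptiness statement: the polytope
$Q(\lambda) \;=\; \{\mu \in [0,1]^{nk} \,:\, \sum_j \mu_i^j = \lambda_i \text{ for all } i,\; \sum_{i,j} \mu_i^j t_i^j \in H\}$
is empty. The defining system of $Q(\lambda)$ has a fixed constraint matrix and a right-hand side that depends linearly (indeed affinely) on $\lambda$.

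Next, I would invoke Farkas' lemma on this parametrised LP to convert $Q(\lambda) = \emptyset$ into an existential statement: there exists a dual witness $y$ satisfying a fixed polynomial-size system of linear inequalities in $(\lambda, y)$. Conjoining with $\Phi(\lambda)$, the negation of $(P3)$ becomes feasibility of a polynomial-size system of linear inequalities in the joint variables $(\lambda, y)$. Standard vertex-size bounds for rational polyhedra guarantee that if this system is feasible it admits a solution of polynomial bit-size. Thus an $\textsf{NP}$ algorithm nondeterministically guesses $(\lambda, y)$ and verifies the linear constraints in polynomial time.

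The main obstacle, which is where the appeal to~\cite{amai17} is essential, is to justify that the specific shape of our QLI (single alternation, implication whose antecedent involves only $\lambda$, antecedent and consequent both polyhedral) allows one to avoid the general \PSPACE\ machinery of~\cite{amai14}. In particular, one must check that the Farkas reformulation does not blow up in size and that the coefficients involved in the dual certificate remain polynomially bounded. The results of~\cite{amai17} on restricted QLIs provide exactly this bound, so once $(P3)$ is cast in their normal form the \coNP\ upper bound follows directly.
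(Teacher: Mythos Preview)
Your bottom line --- cast $(P3)$ as a $\forall\exists$ quantified linear implication with one alternation and invoke the \coNP\ bound for the class QLI$(1,\forall,{\bf B})$ from~\cite{amai17} --- is exactly the paper's proof. The paper does nothing beyond identifying the class and citing Theorem~6 of~\cite{amai17}.

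Your self-contained Farkas sketch, however, has a real gap. The right-hand side of the system defining $Q(\lambda)$ is not constant: the equalities $\sum_j \mu_i^j=\lambda_i$ contribute $\pm\lambda_i$ to it. Writing $Q(\lambda)=\{\mu:A\mu\le b(\lambda)\}$, the Farkas infeasibility certificate is $y\ge 0$, $A^\top y=0$, $y^\top b(\lambda)<0$, and this last constraint is \emph{bilinear} in $(\lambda,y)$, not linear. So the negation of $(P3)$ is not the feasibility of a linear system in the joint variables $(\lambda,y)$ as you claim, and ``vertex-size bounds for rational polyhedra'' do not apply. The obstacle you name (size blow-up, coefficient bounds) is therefore not the actual one; the issue is nonlinearity of the joint certificate.

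A correct NP certificate for the negation is either: (a) a guessed extreme ray $y^\star$ of the $\lambda$-independent cone $\{y\ge 0:A^\top y=0\}$, specified by a polynomial-size basis, after which $(y^\star)^\top b(\lambda)<0$ is linear in $\lambda$ alone and one solves a single LP in $\lambda$ together with $\Phi(\lambda)$; or (b) a guessed vertex $\lambda^\star$ of $H$ (again a polynomial-size basis certificate), using that $\{\lambda:Q(\lambda)\neq\emptyset\}$ is the projection of a polytope and hence convex, so $H$ is contained in it iff every vertex of $H$ is, followed by a polynomial-time LP check that $Q(\lambda^\star)=\emptyset$. Either route gives the \coNP\ bound without the bilinear shortcut; this is presumably the content of the result in~\cite{amai17} you end up citing.
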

\begin{proof}
  First, we observe that $(P3)$ has a single alternation between universally quantified variables and existentially quantified variables, further, the first variable is universally quantified.
  In the notation of~\cite{amai14,amai17}, this means that the problem $(P3)$ is in the class denoted by QLI(1,$\forall$,{\bf B})\footnote{{\bf B} refers to the fact that both existentially/universally quantified variables may occur in both sides of the implication. In fact, we fall in a restriction where existentially quantified variables only occur on Right hand side, but this doesn't change the complexity.}. This allows us to appeal to Theorem~6 of~\cite{amai17} (or see Lemma~5.1 of~\cite{isaim14} for an alternate proof) that states that this class QLI(1,$\forall$,{\bf B}) is \coNP-complete. Thus, we obtain that $(P3)$ is in \coNP.
\end{proof}

Since solvability for this class of QLI is  \coNP-hard as well~\cite{amai14}, one may try to prove that these particular instances are actually as hard as general  QLP(1,$\forall$,{\bf B}) questions. The difficulty is that the equations on the right hand side and on the left hand side are both the same equations associated with $H$, which is a very special case of the general QLI(1,$\forall$,{\bf B}) class and it is not immediately clear how to transform an arbitrary QLI from this class to an instance of $(P3)$. Nevertheless, we next show a direct proof of \coNP-hardness.

\subsection{A co-NP lower bound for universal safety in MDPs}

We now prove a matching lower bound, showing that we cannot hope to find a PTIME algorithm for universal safety in general MDPs (unless $\PTIME=\NP$):

\begin{proposition}  
Checking universal safety for MDPs is \coNP-hard. 
\end{proposition}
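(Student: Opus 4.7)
The plan is to give a polynomial-time reduction from \textsc{3-Unsat} to universal safety for MDPs. Given a 3-CNF formula $\phi$ with variables $x_1,\dots,x_n$ and clauses $C_1,\dots,C_m$, we will construct an MDP $\cA_\phi$ and a convex polytope $H_\phi$ of polynomial size such that $H_\phi = \Hwin^{\cA_\phi}$ if and only if $\phi$ is unsatisfiable.

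The backbone is the following observation, which is a direct consequence of Proposition~\ref{prop:convex}: since $\Hwin^{\cA}$ is a convex subset of $H$, we have $H=\Hwin^{\cA}$ iff the MDP admits an $H$-safe strategy from every extreme point of $H$. Hence it suffices to design $H_\phi$ so that its extreme points correspond (bijectively, up to auxiliary bookkeeping vertices) to the $2^n$ Boolean assignments of the variables of $\phi$. This can be achieved by introducing, for every variable $x_i$, two states $t_i,f_i$ and imposing the constraints $\Delta(t_i)+\Delta(f_i)=\kappa$ and $\Delta(t_i),\Delta(f_i)\in[0,\kappa]$ for a fixed small constant $\kappa$, so that the only extreme choices in these coordinates are $(\kappa,0)$ or $(0,\kappa)$, encoding the truth value of $x_i$.

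The MDP and the remaining constraints of $H_\phi$ will then be crafted so that, for an assignment-vertex $\Delta_v$, a one-step MDP-strategy (recall Lemma~\ref{lem:univ-one-step}) that keeps $\Delta_v$ inside $H_\phi$ exists iff the assignment $v$ falsifies some clause of $\phi$. A natural approach is to add one clause-verification gadget per clause $C_j$: the strategy at state $t_i$ (resp.\ $f_i$) is allowed to route its mass toward the verification block of a clause containing literal $x_i$ (resp.\ $\neg x_i$), and the half-spaces defining $H_\phi$ force all routed mass to land in the block of a single \emph{falsified} clause, which exists iff $v\not\models\phi$. Summing up, $H_\phi=\Hwin^{\cA_\phi}$ iff every assignment falsifies $\phi$, i.e., iff $\phi$ is unsatisfiable.

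The main obstacle will be designing the clause-verification gadgets and the associated inequalities so that three properties hold simultaneously: (a) no spurious extreme points of $H_\phi$ appear outside the intended Boolean assignments; (b) from any satisfying assignment every one-step strategy pushes the next distribution out of some half-space, because the ``route-to-a-falsified-clause'' option is unavailable; and (c) from any falsifying assignment there genuinely is a one-step strategy that stays inside every half-space. Since MDP-strategies are state-dependent, the mass at $t_i$ and $f_i$ can be routed independently, and one must ensure that this extra flexibility cannot be exploited on a satisfying assignment to fake verification of a non-existent falsified clause; this is where the heavy engineering (and, presumably, the paper's mention of a ``complicated reduction'') will lie.
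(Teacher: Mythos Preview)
Your plan follows the paper's approach: reduce from the complement of 3-SAT, encode valuations as distributions, and arrange that a safe one-step strategy from a valuation-distribution exists iff that valuation falsifies some clause. Two design choices in the paper differ from yours and are worth noting.

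First, the paper does \emph{not} reduce to extreme points via convexity of $\Hwin$. The value of $x_i$ is encoded by a single state $X_i$ with the continuous constraint $\Delta(X_i)\in[0,\tfrac{1}{10m}]$. For the implication ``$\phi$ unsatisfiable $\Rightarrow H=\Hwin$'', the paper takes an \emph{arbitrary} $\Delta\in H$, rounds it to the valuation $v$ given by $v(x_i)$ true iff $\Delta(X_i)>\tfrac{1}{20m}$, picks a clause $c_j$ falsified by $v$, and exhibits a fixed one-step strategy $\tau_{v,j}$ sending $\Delta$ back into $H$. This sidesteps your concern~(a) entirely: there is no need to enumerate or control the extreme points of $H$.

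Second, in the paper all the meaningful action choices happen from a \emph{single} ``stable'' state $S$ of fixed mass $\tfrac{1}{10}$: the actions $\alpha_i^j,\beta_i^j$ differ only in what they do to the mass at $S$. Consequently the winning one-step strategy $\tau_{v,j}$ can be taken state-independent, so the very same construction yields \coNP-hardness for PFAs (this is reused in Theorem~\ref{th6}). Your scheme, which routes from the $t_i$ and $f_i$ states separately, is inherently state-dependent and would not deliver the PFA bound.

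One slip in your sketch: routing from $t_i$ (where $v(x_i)$ is true) to clauses \emph{containing} the literal $x_i$ targets \emph{satisfied} clauses, not falsified ones. You need the opposite: from $t_i$ one may commit to $C_j$ only when the positive literal $x_i$ does not occur in $C_j$. In the paper this is enforced by auxiliary states $G_\ell^j$ constrained to zero mass, which receive mass exactly when the $\ell$-th literal of $C_j$ is declared true by the chosen action.
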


The proof is by a reduction from the complement of 3-CNFSAT, which is \coNP-complete. The complement asks, given a 3-CNFSAT formula, if it is uniformly false, i.e., whether for all valuations, there exists a clause which evaluates to false. 

Let $x_1, \ldots x_n$ be the variables and $c_1, \ldots, c_k$ be the clauses (in 3-CNF) of the formula $\Phi$. We let $m = max(k,n)$, be the maximum between the number of variables and the number of clauses.

Our goal is to define an MDP and a polytope $H$ such that $H$ is universally safe iff $\Phi$ is not satisfiable. By the characterization in Lemma~\ref{lem:univ-one-step}, $H$ is universally safe iff from any initial distribution in $H$, there exists a one-step strategy of $\cA$ that remains in $H$. Thus we will in fact design an MDP $\cA$ and a polytope $H$ such that from any initial distribution in $H$, there exists a one-step strategy $\tau$ of $\cA$ that remains in $H$ iff $\Phi$ is not satisfiable.

The states of the MDP will correspond to the variables and clauses, as defined later. We start by defining the alphabet of actions for the MDP, of size $2nk+2$:
 \begin{itemize}
 \item for each $1\leq i \leq n$ and $1\leq j \leq k$, we will have two actions $\alpha^j_i,\beta^j_i$ that are associated with variable $x_i$ and clause $c_j$,
\item one action $\delta$ to replenish a ``stable'' state and one action $\iota$ to ensure that the weight of outgoing actions sums up to 1.
 \end{itemize}

We also introduce a notation. For any clause $c_j$, we denote  $\gamma_1^j$ for $\alpha_i^j$ if $x_i$ is the first literal of $c_j$, and $\gamma_1^j$ for $\beta_i^j$ if $\neg x_i$ is the first literal of $c_j$, and similarly for the second and third literals of $c_j$.

\paragraph*{A high level intuition of the proof}
Each valuation $v$ will correspond to an initial distribution $\Delta_v$.  Given a valuation $v$ for variables $x_1, \ldots, x_n$, we need to check if there is any clause $c_j$ which is false, i.e., such that all literals of $c_j$ are set to false by $v$. 
To find such a $j$, we will let the one-step strategy $\tau$ choose 
{\em uniformly}
the clause $c_j$ which is false:
there must be a $j$ such that for all $i$, 
either $\alpha_i^j$ has positive weight or
$\beta_i^j$ has positive weights (that is, the sum of the two weights is non zero).

For that, we design $H$ to ensure that if $\Delta_v \cdot M_\tau \in H$, then:

\begin{itemize}

\item[I1] for all $i$,
$\sum_j \tau(\alpha_i^j)+\tau(\beta_i^j) = \frac{1}{20m}$.

 \item[I2] for all $j$ and all $i,i'$,
 $\tau(\alpha_i^j)+\tau(\beta_i^j) = \tau(\alpha_{i'}^j)+\tau(\beta_{i'}^j)$.

 \item[I3] 
 for all $j$,
 $\tau(\gamma_1^j)= \tau(\gamma_2^j) = \tau(\gamma_3^j)=0$,

\item[I($v$)] $\sum_j \tau(\beta_i^j)=0$ 
for all $i$ such that $x_i$ is true under $v$,
and \\
$\sum_j \tau(\alpha_i^j)=0$ 
for all $i$ such that $x_i$ is false under $v$.
\end{itemize}

\medskip

We first want to show that for a given valuation $v$, if there is a clause $c_j$ which is false under $v$, then there is a one step strategy $\tau_{v,j}$ with $\tau_{v,j}$ satisfying the conditions I($v$), I1, I2, I3. This strategy is defined as follows:

\begin{enumerate}[label=J\arabic*.]
  \item  $\tau_{v,j}(\alpha_i^{j'})=\tau_{v,j}(\beta_i^{j'})=0$ for $j'\neq j$,
  \item $\tau_{v,j}(\alpha_i^j)=\frac{1}{20m},\tau_{v,j}(\beta_i^j)=0$, if variable $x_i$ is true under $v$,
  \item $\tau_{v,j}(\beta_i^j)=\frac{1}{20m},\tau_{v,j}(\alpha_i^j)=0$, if variable $x_i$ is false under $v$,
\end{enumerate}

For $v,j$ such that $c_j$ is false under $v$, we indeed have that $\tau_{v,j}$ satisfies I($v$), I1, I2, I3.
First, J1,J2,J3 imply 
I($v$),I1,I2 for all $j$.
For I3, for all $j' \neq j$,
J1 implies that 
$ \tau_{v,j}(\gamma_1^{j'})= \tau_{v,j}(\gamma_2^{j'}) = \tau_{v,j}(\gamma_3^{j'})=0$.
To show I3 for the remaining case, i.e., when $j'=j$,
we remark that as $c_j$ is false under $v$, we have I3:
all literals of $c_j$ are set to false by $v$, so 
$I(v)$ (which we already proved) ensures that 
$ \tau_{v,j}(\gamma_1^j)= \tau_{v,j}(\gamma_2^j) = \tau_{v,j}(\gamma_3^j)=0$.
Thus I3 is true.

\medskip

Conversely, we want to show that with such an $H$,
for all valuations $v$, if a one-step strategy $\tau$ satisfies
I($v$), I1, I2, I3, then there is a clause $c_j$
which is false under $v$ (there may be several such clauses,
and the strategy may choose several of them, as long as it does so
uniformly (because of I2) for all $i$).

Consider such a $\tau$.
Now, because of I1, for all $i$, there is some $j_i$ such that $\tau(\alpha_i^{j_i}) + \tau(\beta_i^{j_i}) >0$. 
Because of I2, we know that we can choose $j$ uniform in $i$, i.e., for all $i$, $j_i=j$.
We can apply I3 for this $j$, implying that $\tau(\gamma_1^j),\tau(\gamma_2^j),\tau(\gamma_3^j)$ are all null.
Using $I(v)$, we have that $c_j$ is false under $v$.
Indeed, assume by contradiction that some literal 
of $c_j$ is true under $v$. Wlog, we can assume that it is the first literal of $c_j$, and that this literal is e.g. $\neg x_i$, i.e., $x_i$ false under $v$.
As $\tau(\alpha_i^{j_i}) + \tau(\beta_i^{j_i}) >0$,
and $\tau(\beta_i^{j_i})=\tau(\gamma_1^j)=0$, we have 
$\tau(\alpha_i^{j_i})>0$, which is in contradiction with I($v$)
and  $x_i$ false under $v$.
Thus, there exists a $j$ such that $c_j$ is false under $v$.

Finally, remark that in the forward direction, we need 
to define one-step strategies $\tau$
from all $\Delta\in H$ (so far, we did it only from $\{\Delta_v \mid$ $v$ a valuation$\}$).
To do this, 
we define valuation $v$ such that $\Delta_{v}$ is in some sense (made precise later) close to $\Delta$. 
We show that if there is a clause $c_j$ false under $v$, then one can play $\tau_{v,j}$ from $\Delta$ and stay in $H$.
Notice that when $\Phi$ is true under $v$, there may be 
some $\tau$ defined from $\Delta$ but no $\tau$ from $\Delta_v$.

\paragraph*{Formal construction}

\subparagraph*{States of the machine}

We have $nk+3n+3k+2$ states:
\begin{itemize}
  \item For each variable $x_i$, we associate 3 states $X_i,Y_i,Z_i$, which will be used to ensure I1 and I($v$),
  \item For each clause $c_j$ and variable $x_i$, we associate the state $C^j_i$ which will be used to ensure I2,
\item For each clause $c_j$, we associate the states 
$G_1^j,G_2^j,G_3^j$ which will be used to ensure I3,
  \item One "stable" state $S$ (containing $\frac{1}{10}$, ensured by polytope $H$),
  \item One "trash" state $T$, which will get the rest of the probability mass (which will be at least $\frac{1}{2}$).
\end{itemize}

\subparagraph*{Polytope}

The polytope $H$ is defined as follows (as done before, we write constraints, but it is easy to see that these can be captured as intersection of half-spaces, linear inequalities):

\begin{enumerate}[label=(H\roman*)]
  \item for all $i$, $\Delta(Y_i)+\Delta(Z_i)=\frac{1}{400m}$,
  which is used to ensure $I1$,

  \item For all $j \leq k$ and all $i \neq i' \leq n$, $\Delta(C^j_i)=\Delta(C^j_{i'})$ which is used to ensure $I2$,

\item For each $j \leq k$, we have 
$\Delta(G_1^j)=\Delta(G_2^j)=\Delta(G_3^j)=0$
which is used to ensure $I3$,

  \item $\Delta(S) = \frac{1}{10}$,

 \item $\Delta(X_i) \in [0,\frac{1}{10m}]$ for all $i \leq n$,
 which encodes the valuation of $x_i$,

 \item $\Delta(Y_i) \in [0,\frac{1}{400m}]$ for all $i \leq n$,
 which is associated with the weight of action $\alpha_i$,

 \item $\Delta(X_i) - 20 \Delta(Y_i) \in [0,\frac{1}{20m}]$ for all $i \leq n$,
which enforces I($v$),

\end{enumerate}

\subparagraph*{Actions:}
Every action sends all the mass from $X_i$ to $X_i$, and all the mass from 
 $Y_i,Z_i,C_i^j,G_\ell^j$ to $T$ for all $i\leq n, j\leq k$ and $\ell \in\{1,2,3\}$. All actions except $\delta$ send all the mass from $T$ to $T$. Action $\delta$ sends all the mass from $T$ to $S$.

The main difference in the actions is what happens from the single state $S$.
That is why this lower bound applies to MDPs (and PFAs): choosing actions based on state does not make a difference.

Action $\iota$ sends the mass from $S$ to $T$,
while $\delta$ sends all the mass from $S$ to $S$.

Actions $\alpha_i^j,\beta_i^j$ transform the mass of $S$ as follows:
\begin{itemize}
\item $\frac{1}{2}$ into $Y_i$ for $\alpha_i^j$ and $\frac{1}{2}$ into $Z_i$ for $\beta_i^j$. This combined with (Hi) implies I1 and 
combined with (Hvii) implies I($v$),
  \item $\frac{1}{20m}$ into $C_i^j$ for both. This combined with (Hii) implies I2,
  \item $\alpha_i^j$ (resp. $\beta_i^j$) sends $\frac{1}{20m}$ into $G_\ell^j$ if it is $\gamma_\ell^j$. 
This combined with (Hiii) implies I3,
  \item the rest of the mass of $S$ is sent back to $S$.
\end{itemize}

\paragraph*{Enforcing I($v$).}

Let $v$ be a valuation.
We associate to $v$ a distribution $\Delta_v \in H$ such that $\Delta_v(X_i)=0$ if $x_i$ is false under $v$, and
$\Delta_v(X_i)=\frac{1}{10m}$ if $x_i$ is true under $v$.
The mass in $S$ is $\Delta_v(S)=\frac{1}{10}$ and
other states can have arbitrary mass as long as $\Delta_v \in H$
(such $\Delta_v \in H$ exists for every valuation $v$).

Let $\tau$ be a one-step strategy such that 
 $\Delta_2 = \Delta_v \cdot M_\tau \in H$.
For all $i \leq n$, we denote by $a_i$ the sum of weights 
$\sum_j \tau(S,\alpha_i^j)$
from state $S$ of action $\alpha^j_i$ for $j\leq k$. Similarly we denote by $b_i=\sum_j \tau(S,\beta_i^j)$.
For all $i$, we have $\Delta_2(Y_i)= \frac{a_i}{20}$ by construction, as $\Delta_v(S)=\frac{1}{10}$.
Also $\Delta_2(X_i)=\Delta_v(X_i)$ because all actions send all mass from $X_i$ to $X_i$. 

Now, assume that $\Delta_v(X_i)=0$ (i.e., $x_i$ is false under $v$).
Then, we have $\Delta_2(X_i)=0$ by construction.
As $\Delta_2(X_i) - 20 \Delta_2(Y_i) \geq 0$ and $\Delta_2(Y_i) \geq 0$, it forces $\Delta_2(Y_i)=0$ and thus $a_i=0$.

In the same way, for $\Delta_v(X_i)=\Delta_2(X_i)=\frac{1}{10m}$
($x_i$ is true under $v$), 
we have $\Delta_2(Y_i) \geq \frac{1}{400m}$.
Because of (Hi),  we have 
$\Delta_2(Z_i)=0$, which implies that $b_i=20m\Delta_2(Z_i)=0$. That is, I($v$) is ensured.

Notice that once $\tau(S,\alpha^j_i),\tau(S,\beta^j_i)$ have been chosen, there is exactly one choice of weight $\tau(T,\delta)$ 
of $\delta$ which ensures that $S=\frac{1}{10}$,
and the rest of the weight of $\tau$ from every state goes to $\iota$ ($T$ contains at least $\frac{1}{2}$ of the probability mass because the sum of the maximum of all other state is less than half. Also, with the previously defined choice of actions, there is at least $\frac{1}{2}$ of the weight left which can be assigned to $\delta$).

To complete the proof, we sketch that the following statements are equivalent (more details can be found in the appendix):
\begin{itemize}
  \item[(i)] $H$ is universally safe 
  \item[(ii)] for all valuations $v$, there exists a $\tau$ such that $\Delta_v \cdot M_\tau \in H$
  \item[(iii)] the 3CNF formula $\Phi$ is uniformly false. 
\end{itemize}

(i) implies (ii) is trivial. For the other directions, we provide a short and rough sketch here and leave the formal details to the Appendix. First, assume (ii). For all valuations $v$, let  $\tau_v$ be such that $\Delta_v \cdot M_{\tau_v} \in H$. As sketched in the high-level description, it implies that some clause $c_j$ is false under $v$, which implies (iii).

Finally, assume (iii). Then, consider a distribution $\Delta \in H$.  We will associate a valuation $v$ to $\Delta$. For all $i$, either $\Delta(X_i) \leq \frac{1}{20m}$ and one can choose $v$ setting $x_i$ to false ($\Delta_2(Y_i)=0$).
Otherwise, $\Delta(X_i)=\Delta_2(X_i) > \frac{1}{20m}$ and one can choose 
$v$ setting $x_i$ to true ($\Delta_2(Y_i)=\frac{1}{400m}$).  As (iii) is true, we have some $c_j$ false under $v$. Applying the one-step strategy $\tau_{v,j}$  sketched in the high-level description yields: $\Delta \cdot M_{\tau_{v,j}} \in H$, which implies that (i) holds.

\section{Universal safety for PFAs}
Finally, we show that the universal safety problem for PFAs is still decidable, but with a higher complexity of \EXPT. 

\begin{theorem}
\label{th6}
The universal safety problem for PFAs can be solved in \EXPT\ and is \coNP-hard.
  \end{theorem}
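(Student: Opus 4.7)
The \coNP-hardness is inherited directly from the MDP reduction in Section~5. That construction was deliberately arranged so that every action behaves identically at every state except the stable state $S$: all actions send the mass of each variable or clause state either to itself or to the trash state $T$. Consequently the state-dependence of MDP strategies is never actually exploited, and the same automaton and polytope serve verbatim as a PFA hardness instance for universal safety.

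For the \EXPT\ upper bound, the plan is to compute $\Hwin^\cA$ as the greatest fixed point of the monotone operator
\[
F(W) \;=\; \bigl\{\Delta \in W \;\bigm|\; \exists\, \tau\in[0,1]^{|\Sigma|},\ \textstyle\sum_\alpha \tau(\alpha)=1,\ \Delta\cdot M_\tau \in W\bigr\},
\]
starting from $W_0 = H$ and iterating $W_{k+1} = F(W_k)$. Lemma~\ref{lem:univ-one-step} does not transfer to the PFA setting, because the example of Figure~\ref{fig:pfa} shows that $\Hwin^\cA$ can be non-convex, ruling out a collapse to a single-step criterion. Each $W_k$ is a semilinear subset of the stochastic simplex: going from $W_k$ to $W_{k+1}$ prepends one block of real existential quantifiers over the $|\Sigma|$ coordinates of $\tau$ together with a bilinear constraint expressing $\sum_\alpha \tau(\alpha)(\Delta\cdot M_\alpha)\in W_k$, and a Fourier--Motzkin-style projection brings $W_{k+1}$ back to quantifier-free form. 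Universal safety is then the containment $H\subseteq W_\infty$, and stabilisation is detected by the equality $W_k = W_{k+1}$; both are decidable in the linear first-order theory of the reals.

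The main obstacle, and the heart of the proof, is to bound the number of iterations of $F$ by a singly-exponential function of $|\cA|+|H|$. The plan is a descending-chain argument: every non-trivial refinement $W_{k+1}\subsetneq W_k$ must introduce a new hyperplane whose coefficients arise as a linear combination of rows of the $M_\alpha$ with the half-spaces already defining $W_k$, and this combinatorial family generates only singly-exponentially many inequivalent refinements inside the bounded stochastic simplex. Combining this iteration bound with the polynomial per-step cost of Fourier--Motzkin projection and a canonicalisation of $W_k$ after each step yields the \EXPT\ upper bound. The delicate part will be preventing the representation of $W_k$ from blowing up super-exponentially as the iteration proceeds; the plan is to exploit the stochasticity of the $M_\alpha$ and the compactness of the simplex to keep the generating hyperplane family from growing too fast.
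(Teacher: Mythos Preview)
Your \coNP-hardness argument matches the paper's: the Section~5 reduction is state-uniform at the only place it matters, so it is already a PFA instance.

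For the upper bound you have gone badly astray on one point. You assert that Lemma~\ref{lem:univ-one-step} fails for PFAs because $\Hwin^\cA$ need not be convex. But the proof of Lemma~\ref{lem:univ-one-step} never uses convexity of $\Hwin$; it is a pure greatest-fixed-point argument. If every $\Delta\in H$ admits a one-step PFA strategy back into $H$, then iterating those one-step choices along the trajectory gives an $H$-safe strategy from every $\Delta$, so $H=\Hwin$; the converse is immediate. The non-convexity example after Proposition~\ref{prop:convex} refutes Lemma~\ref{convMDP}, not Lemma~\ref{lem:univ-one-step}. The paper explicitly notes that Lemma~\ref{lem:univ-one-step} holds verbatim for PFAs and then writes the one-step criterion as a single $\forall\exists$ sentence in \FOR\ (bilinear in $(\lambda_i)$ and $(\mu_j)$), giving $2$\EXPT\ at once. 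The drop to \EXPT\ comes from replacing the inner $\exists$ by a separating-hyperplane argument: $Im(\Delta)$ is the convex hull of the $k$ points $\Delta\cdot M_{\alpha_j}$, so $Im(\Delta)\cap H=\emptyset$ iff some half-space $K$ contains all $k$ of those points and excludes every vertex of $H$. This yields a purely existential sentence whose only blow-up is the (at most exponential) number of vertices of $H$, and the $L(md)^{O(n^2)}$ bound for \EFOR\ then gives \EXPT.

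Even setting aside the unnecessary detour, your fixed-point plan has a technical gap: the sets $W_k$ are not semilinear. The one-step constraint $\sum_\alpha \tau(\alpha)\,\Delta\cdot M_\alpha\in W_k$ is bilinear in $(\Delta,\tau)$, and projecting out $\tau$ does not yield a finite union of polyhedra; Fourier--Motzkin elimination applies only to linear systems. The $W_k$ are semialgebraic, and tracking their description through the iteration would require real quantifier elimination at each step, with no evident control on degree growth. Your descending-chain bound (``only singly-exponentially many inequivalent refinements'') is asserted rather than argued, and it is unclear what the generating family of hyperplanes is once the sets cease to be polyhedral. The paper's route avoids all of this by never iterating: one step suffices.
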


\begin{proof}

The hardness follows by observing that the proof of \coNP-hardness for MDPs, works \emph{mutatis-mutandis} for PFAs. Hence, universal safety is also \coNP-hard for PFAs.

Next, we observe that universal safety continues to be a property on one-step strategies. In other words, Lemma~\ref{lem:univ-one-step}
and its proof holds verbatim for PFAs as well. From this, for universal safety of PFAs, it suffices to check the following proposition in the First Order Theory of Reals (denoted \FOR): is it the case that for all $\lambda_1, \ldots, \lambda_n \in [0,1]^n$ with $\sum \lambda_i \vec{s_i} \in H$, there exist $\mu_1, \ldots, \mu_k \in [0,1]^k$ with $\sum_j \mu_j=1$ and 
$\sum_{i,j} \lambda_i \mu_j \vec{s_i} \cdot M_{\alpha_j} \in H$.
There, $(\lambda_i)_{i \leq n}$ represent the coordinates over the basis $\vec{s_i}, \ldots, \vec{s_n}$ of a distribution $\Delta = \sum_i^n \lambda_i \vec{s_i} \in H$,
while $(\mu_j)_{j \leq k}$ are the coefficients 
of the one-step strategy $\tau$
with $\tau(\alpha_j)=\mu_j$ for actions $\alpha_1, \ldots, \alpha_k$.

It is well known that \FOR is in 2\EXPT, which gives 
decidability in 2\EXPT\ for this problem. Note that since we have PFAs, we cannot exploit the convexity of $\Hwin$ as in MDPs, to encode the problem in quantified variants of linear programming. 

In the following, we will show that we can improve this result from 2\EXPT\ to \EXPT. The main idea is that we reduce the above question to an equivalent \emph{existential} FO (denoted \EFOR) formula, which involves an exponential blowup.

  Consider $t_i^j = \vec{s_i} \cdot M_{\alpha_j}$  obtained from $\vec{s_i}$ playing action $\alpha_j$. For $\delta = \sum \lambda_i s_i$.
Let $\Delta = \sum_i \lambda_i \vec{s_i} \in H$.
We can define $Im(\Delta)= \{\sum_{i,j}  \lambda_i  \mu_j t_i^j \mid \mu_1, \ldots, \mu_k \in [0,1]^k, \sum_j \mu_j=1\}$.
We have $Im(\Delta)$ is convex: 
given $\Gamma_1,\Gamma_2 \in Im(\Delta)$, associated with $(\mu_j),(\nu_j)$ and given $\ell \in [0,1]$,
it suffices to choose $\kappa_j = \ell \mu_j + (1-\ell) \nu_j$
for all $j$ to prove that $\ell \Gamma_1 + (1-\ell) \Gamma_2 \in Im(\Delta)$. 
Further, $Im(\Delta)$ have $k$ corner points,
one for 
each $j \leq k$,  obtained with $\mu_j=1$, defined as
$\sum_i \lambda_i t^i_j$.

Using the separation theorem (consequence of Hahn-Banach theorem), 
$Im(\Delta) \cap H = \emptyset$ iff 
there exists an hyperplane $K$ which separates $Im(\delta)$ and $H$
iff there exists $K$ a half space with $K \cap H = \emptyset$
and  $Im(\Delta) \subseteq K$.

Thus, we can rewrite the above condition as: Does there exist $\lambda_1,\ldots, \lambda_n \in [0,1]^n$ with   $\sum_i \lambda_i s_i \in H$ and a half space $K$ (linear number of equations to existentially guess) disjoint of $H$  (need to check that every corner point is not in $K$),
such that $\sum_i \lambda_i t^j_i \in K$ for all $j \leq k$ (linear number of equations).
Notice that for general $H$ under the H-representation, the number of corner points is exponential in $|H|$. 

Now, we exploit the fact that there are algorithms for {\em existential} F0 over reals that run in $O\left(L(md)^{n^2}\right)$\cite{FO} where $L$ is the number of bits needed to represent the formula, $m$ is the number of polynomials in the FO sentence, $d$ is the max-degree of polynomials and $n$ is the number of variables. For general $H$, $L$ and $n$ polynomial in input size, $d$ is a constant and $m$ is exponential in input size. Note that even with $m$ being exponential the run time is still an exponentially bounded function and we obtain an EXPTIME upper bound.
\end{proof}

\section{Polytopes under the V-representation.}

\label{vrepres}

The above proof for PFAs suggests that the input representation of the polytope is very important. Indeed, the exponential blowup in the above result for PFAs is due to the fact that polynomially many linear equations can define a polytope with exponentially many corner points.
This motivates us to consider another representation of convex 
polytopes, called the V-representation,
which gives as input the set $corner(H)$ of
$r$ corner points $\Gamma_1, \ldots, \Gamma_r$ 
of the convex polytope $H$. With this representation, checking for 
$\sum_i^n \lambda_i \vec{s_i} \in H$  is done by 
asking whether there exists $\nu_1, \ldots, \nu_r \in [0,1]^r$ such that $\sum_i^n \lambda_i \vec{s_i} = \sum_j^r \nu_j \Gamma_j$.
Existential safety is thus still in \PTIME\ for MDPs, and still undecidable for PFAs.

On the other hand, for universal safety we get better upper bounds, when the polytope is given in the V-representation. For PFAs, it suffices to use the proof of Theorem \ref{th6}, and remark that the number of vertices is polynomial in the input size in this case. We can therefore write this in  \EFOR\, whose complexity is in the class $\exists \mathbb{R} \subseteq$ \PSPACE\ (see \cite{npr} for a formal definition of this class). For MDPs, we can improve the complexity even further obtaining a \PTIME\ upper bound matching existential safety for MDPs. That is,

\begin{theorem}
\label{thm:vpolytope}
  Let $H$ be a polytope given by its V-representation, then solving universal safety can be done in \PTIME\ for MDPs and $\exists \mathbb{R}$ 
for PFAs.
\end{theorem}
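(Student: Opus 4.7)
The plan is to handle the two parts separately, exploiting the convexity of $\Hwin^\cA$ for MDPs (Proposition~\ref{prop:convex}) and the \EFOR reduction established in the proof of Theorem~\ref{th6} for PFAs, observing that the V-representation makes the ``dangerous'' exponential factors collapse to polynomial ones.

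For the PFA case, I would simply revisit the \EFOR encoding from Theorem~\ref{th6}. The only reason that argument gave \EXPT\ was that we had to enforce $K \cap H = \emptyset$ by stating that every corner point of $H$ lies outside $K$, and with the $H$-representation the number of such corners can be exponential. Under the V-representation the corners $\Gamma_1, \ldots, \Gamma_r$ are given explicitly, so ``$\Gamma_\ell \notin K$ for every $\ell$'' expands to $r$ linear disequations, keeping the sentence of polynomial size. Membership $\sum_i \lambda_i \vec{s_i} \in H$ is also expressible polynomially via $r$ nonnegative coefficients summing to $1$. The formula is thus purely existential of polynomial size in the input, and can be decided in $\exists\mathbb{R}$.

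The MDP case is the more interesting half. The key claim is: $H = \Hwin^\cA$ iff for every corner point $\Gamma_\ell$ of $H$ there exists a one-step strategy $\tau_\ell$ with $\Gamma_\ell \cdot M_{\tau_\ell} \in H$. The forward direction is immediate, since each corner lies in $\Hwin^\cA$ and the first step of any $H$-safe strategy qualifies. For the converse, I would combine Lemma~\ref{convMDP} with Lemma~\ref{lem:univ-one-step}: the set of $\Delta \in H$ admitting a one-step strategy into $H$ is convex by Lemma~\ref{convMDP}, so if it contains every $\Gamma_\ell$ it contains all of $H$, and Lemma~\ref{lem:univ-one-step} then yields $H = \Hwin^\cA$.

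Given this characterization, the algorithm loops over the $r$ corner points and, for each $\Gamma_\ell$, checks feasibility of a linear program whose variables are the strategy weights $\tau_\ell(\alpha,s_i) \in [0,1]$ (with the per-state sum-to-one constraints) together with coefficients $\nu_1,\ldots,\nu_r \geq 0$ summing to $1$; the linear constraint is
\[
\sum_{\alpha,i} \tau_\ell(\alpha,s_i)\, \Gamma_\ell(i)\, t_i^{\alpha} \;=\; \sum_{j=1}^r \nu_j \Gamma_j,
\]
which expresses $\Gamma_\ell \cdot M_{\tau_\ell} \in H$ using the V-representation of $H$. Each such LP is polynomial in the input and solvable in \PTIME, and there are $r$ of them, yielding the desired \PTIME\ upper bound. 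The one delicate point to verify in writing the full proof is simply that this linear encoding correctly captures $\Gamma_\ell \cdot M_{\tau_\ell}$ through the standard change of variables $\mu_i^{\alpha} := \tau_\ell(\alpha,s_i)\,\Gamma_\ell(i)$, after which everything reduces to routine convexity bookkeeping.
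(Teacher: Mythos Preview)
Your proposal is correct and follows essentially the same approach as the paper: for PFAs you revisit the \EFOR encoding of Theorem~\ref{th6} and observe that the corner enumeration becomes polynomial under the V-representation, and for MDPs you reduce to checking one-step reachability into $H$ from each corner via a linear program, using the convexity supplied by Lemma~\ref{convMDP} together with Lemma~\ref{lem:univ-one-step}. Your convexity argument for the MDP direction is in fact slightly cleaner than the paper's written version, which phrases it via $\Hwin$ rather than directly via the one-step set.
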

For MDPs, 
using the convexity of $\Hwin$ (Lemma \ref{convMDP}),
we show that it suffices to test safety from $corner(H)$.
For each 
of the linearly many distributions in $corner(H)$, 
this can be done in \PTIME.

\begin{lemma}
Let $M$ be an MDP. Then
$H = H_{win}$
iff
for all 
distribution $\Delta$ in $corner(H)$,
there exists a 
one-step strategy $\tau$ with $\Delta \cdot M_\tau \in H$.

Further, given $\Delta \in H$, 
checking whether there exists a one-step strategy $\tau$
with $\Delta \cdot M_\tau \in H$ can be done in \PTIME.
\end{lemma}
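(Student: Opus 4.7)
The plan is to handle the two parts separately: the first, a reduction of universal safety to the corners via convexity; the second, a polynomial-size linear-programming feasibility test.

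For the characterization, the forward direction is immediate since $corner(H) \subseteq H = \Hwin$, and Lemma~\ref{lem:univ-one-step} supplies a one-step strategy $\tau$ with $\Gamma \cdot M_\tau \in H$ for every corner $\Gamma$. For the reverse direction, suppose each $\Gamma_\ell \in corner(H)$ admits such a one-step strategy. An arbitrary $\Delta \in H$ can be written as a convex combination $\Delta = \sum_{\ell=1}^r \nu_\ell \Gamma_\ell$ of the corner points. I would then proceed by induction on the number of corners with $\nu_\ell > 0$: writing $\Delta = \nu_1 \Gamma_1 + (1-\nu_1) \Delta'$, where $\Delta'$ is the rescaled convex combination of the remaining corners, the induction hypothesis gives a one-step strategy sending $\Delta'$ into $H$, and Lemma~\ref{convMDP} combines this with the strategy for $\Gamma_1$ to yield a one-step strategy $\tau$ with $\Delta \cdot M_\tau \in H$. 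Another invocation of Lemma~\ref{lem:univ-one-step} then gives $H = \Hwin$.

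For the algorithmic part, given $\Delta \in H$, I would follow the LP formulation suggested by Lemma~\ref{lem:lp}, adapted so that the target $\sum_{i,j} \mu_i^j t_i^j$ may be any point of $H$ (rather than $\Delta$ itself), and with membership in $H$ expressed via the V-representation. Concretely, the feasibility test asks for $\mu_i^j \ge 0$ ($i \le n$, $j \le k$) with $\sum_j \mu_i^j = \Delta(s_i)$ for each $i$, together with $\nu_1, \ldots, \nu_r \ge 0$ satisfying $\sum_\ell \nu_\ell = 1$ and $\sum_{i,j} \mu_i^j t_i^j = \sum_\ell \nu_\ell \Gamma_\ell$. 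Setting $\mu_i^j = \Delta(s_i)\,\tau(\alpha_j,s_i)$ recovers a strategy $\tau$ from a feasible solution, and conversely. This linear program has $O(nk + r)$ variables and $O(n+r)$ linear constraints, each of size polynomial in the input, so it can be decided in \PTIME.

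The only delicate point is the induction step in the reverse direction of the characterization: Lemma~\ref{convMDP} is stated for two distributions $x, y \in H$ that each reach $H$ in one step, and it is essential that it allows us to take $y$ to be the intermediate combination $\Delta'$ of the remaining corners, which lies in $H$ but is not a corner. Since Lemma~\ref{convMDP} makes no distinction between corners and arbitrary points of $H$, the induction goes through, and I do not anticipate any other non-routine obstacle.
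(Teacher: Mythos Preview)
Your proposal is correct and follows essentially the same route as the paper. For the characterization you unfold the convexity argument via Lemma~\ref{convMDP} and an induction on the support of the convex combination, whereas the paper appeals directly to the convexity of $\Hwin$ and the fact that $H = hull(corner(H))$; for the \PTIME\ part, your LP is the same as the paper's after the change of variables $\mu_i^j \leftrightarrow \lambda_i \mu_i^j$ (your absorbed form is in fact cleaner, since it avoids the apparent bilinear term $\lambda_i \mu_i^j$).
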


\begin{proof}
One direction is trivial.
For the other direction, if $corner(H) \subseteq \Hwin$, then as $\Hwin$ is convex, looking at the convex hull, we obtain 
$H= hull(corner(H)) \subseteq \Hwin \subseteq H$ and we get the equality.

\begin{table}[t!]
\begin{center}
\begin{tabular}{| c || c | c |}
\hline
Complexity of safety & MDPs & PFAs \\
\hline
\hline
Existential & \PTIME & undecidable \\
\hline
Universal & \PTIME & $\exists \mathbb{R}\subseteq$\PSPACE\\
\hline
\end{tabular}
\end{center}
\caption{Complexity for polytopes under the $V$-representation.
\vspace{0.3cm}
}
\label{tab:results}
\vspace*{-1cm}
\end{table}

For the second statement, 
let $A=\{\alpha_1, \cdots, \alpha_k\}$ be the actions.
Let $(\lambda_i)_{i \leq n}$  be the coordinates of $\Delta$, i.e,
$\Delta = \sum \lambda_i \vec{s_i} \in H$.
A one-step strategy $\tau$ of an MDP is given by a tuple
$(\mu_i^j)_{i \in \{1,\ldots, n\}}^{j \in \{1,\ldots, k\}}$ 
s.t. for all $i \leq n$, the mix of actions 
$\sum_{j=1}^k \mu_i^j \alpha_j$ is played by $\tau$ 
from state $s_i$, with 
$\sum_{j=1}^k \mu_i^j=1$.
For each $\alpha_j \in A$ and each state $s_i$, we let $t_i^j$
be the distribution reached from $s_i$ playing $\alpha$.
We thus have $\exists \tau$ such that $\Delta \cdot M_\tau \in H$ iff 
$\exists \nu_1, \ldots, \nu_r, \mu_1^1,\ldots, \mu_n^k \in [0,1]^{r+nk}$ such that 
$\sum_{i,j} \lambda_i \mu_i^j t^{j}_i = \sum_i^r \nu_i \Gamma_i$, i.e., a set of linear inequalities (as the $(\lambda_i,\Gamma_i)$ are given). 
This is a linear program which can be solved in \PTIME.
\end{proof}

\section{Conclusion}
In this paper, we have defined and analyzed the dynamic behavior of MDPs and PFAs via distribution-based objectives. Our results are summarized in Table~\ref{tab:results1} (in the Introduction) and Table~\ref{tab:results} (above). 
We obtained tight complexity results for MDPs and safety objectives defined by convex polytope in the usual $H$-representation, with \PTIME-completeness for the existential question and \coNP-completeness for the universal question. When the polytopes 
are given in the $V$-representation, we obtain better upper bounds, namely \PTIME\, even for universal safety. These efficient complexity results are surprising, especially in light of the initialized safety problem (i.e., safety from a given initial distribution), which is at least Skolem-hard~\cite{ipl-15}, and is not known to be decidable.

Concerning PFAs, the complexities are higher than MDPs, which is unsurprising. 
The gap between MDPs and PFAs is large for existential safety (undecidable vs \PTIME), while it is not as large for universal safety
(\EXPT\ vs \coNP). Interestingly, universal safety has
better complexity than existential safety for PFAs, 
while it is the opposite for MDPs.

We would like to highlight that proving these results required us to use a wide variety of techniques: from (quantified) linear programming to theory of reals, fixed point theorems and SAT/2-counter machine reductions, illustrating the richness of this topic.

In this paper, we considered safety objectives because they are natural and have been considered in simpler deterministic contexts~\cite{Tiwari04:CAV,poly17}. In terms of future work, distribution-based objectives are not restricted to safety problems. Another natural problem is the escape problem, where we ask for the existence of a strategy escaping the convex polytope $H$, or equivalently whether all strategies are safe (they stay inside $H$). In deterministic settings (i.e., with a single alphabet), both problems coincide, as there is a unique strategy.

\bibliographystyle{ACM-Reference-Format}
\citestyle{acmnumeric}     
\bibliography{ref}


\section*{Appendix}		  

\bigskip

\subsection*{Proof of Lemma 2.4}

Indeed, if $\Delta_1\in \Hwin$ has an associated $H$-safe strategy $\sigma=\tau_1\tau_2\ldots$, then picking $\tau=\tau_1:\Sigma\times S\ra [0,1]$ results in $\Delta_2=\Delta_1\cdot M_\tau\in H$. But then $\sigma'=\tau_2\ldots $ is a $H$-safe strategy from $\Delta_2$, hence $\Delta_2\in\Hwin$.
  Conversely, if $\Delta_1\in H$ is such that there is a one step strategy $\tau$ such that $\Delta_1 M_\tau\in \Hwin$, this means that there exists a $H$-safe strategy $\sigma=\tau_1\ldots $ starting from $\Delta_2=\Delta_1 \cdot M_\tau$. Then $\tau \sigma$ is a $H$-safe strategy starting from $\Delta_1$ which implies that $\Delta_1\in \Hwin$.

\bigskip

\subsection*{Proof of Lemma 2.5}

We denote $x_i,y_i,z_i$ the probabilities of distributions $x,y,z$ on state $s_i$ for all $i$.
We have $z = \lambda x + (1-\lambda) y$
for some $\lambda \in [0,1]$. 
We thus have $z_i= \lambda x_i + (1-\lambda) y_i$.

The one step strategy $\tau_z$ we will apply to $z$ is defined as follows: 
we first let $\tau_x^i,\tau_y^i,\tau_z^i$ the mix of actions applied to state $s_i$ by $\tau_x,\tau_y,\tau_z$.
We define 
$\tau_z^i=\lambda \frac{x_i}{z_i} \tau^i_x + (1-\lambda) \frac{y_i}{z_i} \tau^i_y$.
Notice that this action is in the convex hull $[\tau^i_x,\tau^i_y]$ 
of one step strategies $\tau^i_x,\tau^i_y$ on each state $s_i$, 
so it is possible to make it in the MDP.
Indeed, $\lambda \frac{x_i}{z_i} + (1-\lambda) \frac{y_i}{z_i} = 
\frac{\lambda x_i + (1-\lambda) y_i}{z_i}=\frac{z_i}{z_i}=1$.

Further, $0 \leq \lambda \frac{x_i}{z_i} \leq \frac{1}{1+ \frac{1-\lambda y_i}{\lambda x_i}}\leq 1$
as $\frac{1-\lambda y_i}{\lambda x_i} \geq 0$.
In the same way, $0 \leq (1-\lambda) \frac{y_i}{z_i} \leq \frac{1}{1+ \frac{\lambda x_i}{(1-\lambda)x_i}}\leq 1$
as $\frac{\lambda x_i}{(1-\lambda)x_i}\geq 0$.

Consider $\tau_z$ applied on $z$: 
We have $M_{\tau_z} z = M_{\tau_x} \sum_i \lambda \frac{x_i}{z_i} z_i s_i
+ M_{\tau_y} \sum_i (1-\lambda) \frac{y_i}{z_i} z_i s_i$.
We thus have
$M_{\tau_z} z = 
\lambda M_{\tau_x} \sum_i x_i s_i + (1-\lambda) M_{\tau_y} \sum_i y_i s_i =\lambda M_{\tau_x} x  + (1-\lambda) M_{\tau_y} y$.
Hence $M_{\tau_z} z$ is in the convex hull of 
$M_{\tau_x} x$ and $M_{\tau_y} y$, both in $H$ by hypothesis.
As $H$ is convex, $M_{\tau_z} z$ is also in $H$.
This proves the lemma.

\bigskip

\subsection*{Proof of Lemma 3.5}

In the proof of Lemma~\ref{convMDP}, taking $z = \lambda x + (1-\lambda) y$, 
and two one-step strategies $\tau_x,\tau_y$, we defined a 
one-step strategy $\tau_z$ with 
$z \cdot M_{\tau_z}= \lambda x \cdot M_{\tau_x} + (1-\lambda) y \cdot M_{\tau_y}$. 
Let $x,y \in X$ and $z = \lambda x +(1-\lambda)y$ in the convex hull of $x$ and $y$. Choosing $\tau_x,\tau_y$ s.t.  $x=x \cdot M_{\tau_x}$ and  $y = y \cdot M_{\tau_y}$, we get $z \cdot M_{\tau_z}=  \lambda x + (1-\lambda) y = z$,  i.e., $z \in X$.

\bigskip

\subsection*{Proof of Lemma 3.6}

First, assume that $\sum_j \mu^j \leq 1$ and  $\delta = \sum_j \mu^j t_i^j$
for some $\mu^1, \ldots, \mu^k \in [0,1]$. Consider $\nu^j = \frac{1}{|\delta|} \mu^j$ and $\lambda = |\delta|$, where $|\delta|=\sum_j\mu^j$. Consider $\tau$ defined as the mix of actions $\sum_{j=1}^k  \nu^j \alpha_j$. We have $\delta = \lambda \vec{s_i} \cdot M_\tau$.
Conversely,  let $\delta = \lambda \vec{s_i} \cdot M_\tau$ for some $\tau$. $\tau$ plays the mix of actions  $\sum_{j=1}^k  \nu^j \alpha_j$ from $s_i$. Considering $\mu^j=\lambda \nu^j$ for all $j \leq k$, we obtain $\delta = \sum_j \mu^j t_i^j$.

\bigskip

\subsection*{Proof of Lemma 3.7}

In the forward direction, if there exist $\lambda_1, \ldots, \lambda_n\in [0,1]$ with:

\begin{itemize}
  \item $\Delta= \sum_i \lambda_i \vec{s_i} \in H$ and
  \item there exists a one-step strategy $\tau$
 with $\Delta \cdot M_\tau=\Delta$.
 \end{itemize}

 Let $\tau_i$ be the one-step strategy played by $\tau$
 from $s_i$ for all $i$.
 For all $j \leq k$,
 let $\mu_i^j$ be the weight of action $\alpha_j$ in $\tau_i$.
 It suffices to consider the same 
 $\lambda_1, \ldots, \lambda_n$
 and
 $(\lambda_i \cdot \mu_i^j)_{i \leq n}^{j \leq k}$:
 We have $(1)$ and $(2)$ by hypothesis.
 Finally, we also have $(3)$ because  
 $\sum_i \lambda_i \vec{s_i}=
  \sum_i \lambda_i \vec{s_i} \cdot M_\tau=
  \sum_{i,j} \lambda_i \mu_i^j t_i^j$. 

\medskip 

In the reverse direction, let $ \lambda_1, \ldots, \lambda_n \in [0,1]$ and $\mu^1_1, \ldots, \mu^k_n \in [0,1]$ satisfy $(1),(2),(3)$. Then consider $\delta_i$ the subdistribution $\sum_j \mu_i^j t_i^j$. We have $\delta_i \in Im_i$ by Lemma~\ref{lem:im-charac}. Hence for all $i$, one can find one-step strategies $\tau_i$ and a $\mu_i \in \mathbb{R}$ such that $\mu_i \vec{s_i} \cdot M_{\tau_i} = \delta_i$.
Now, $|\mu_i \vec{s_i} \cdot M_{\tau_i}|_1 = \mu_i$
and $|\delta_i|_1=\sum_j \mu_i^j=\lambda_i$ by (2).
Thus $\mu_i=\lambda_i$.
We let $\tau$ be the one-step strategy playing  $\tau_i$ from all $s_i$
and $\Delta = \sum_i \delta_i$. By (3), we get that $\Delta = \sum_i \lambda_i \vec{s_i}$. By (1), $\Delta \in H$, i.e., it is a distribution in $H$. Thus, we finally get,  $\Delta \cdot M_\tau = \sum_i \lambda_i \vec{s_i} \cdot M_{\tau_i}$
and $\Delta = \sum_i \delta_i = \sum_i \lambda_i \vec{s_i} \cdot M_{\tau_i}$ as $\mu_i=\lambda_i$ for all $i$.
That is, $\Delta = \Delta \cdot M_\tau \in H$.

 \bigskip

\subsection*{Proof of Theorem 4.1}
Let $\Delta \in H$ and let $\tau$ a one-step strategy such that 
  $\Delta_2=\Delta \cdot M_\tau \in H$. We have the following:
\begin{itemize}
  \item Let $i \leq n$.
  If $\tau(\alpha_i)+\tau(\beta_i)>0$, then $\Delta(P)=\frac{i}{1000n}$. 
  
  Consider $x_i=\tau(\alpha_i)+\tau(\beta_i)>0$.
  We have $\Delta_2(CP_i) = x_i \frac{1}{10i} \Delta(P)$
  and $\Delta_2(CQ_i) = x_i \frac{1}{10000n}$.
  As $\Delta_2(CP_i)=\Delta_2(CQ_i)$ because of (h4) 
  of $H$,
  it gives $\Delta(P)=\frac{10i}{10000n}=\frac{i}{1000n}$.

  \item Thus if $\Delta(P)=\frac{i}{1000n}$, we have 
  $\tau(\alpha_j)+\tau(\beta_j)=0$ for all $j \neq i$ and  
  $\tau(\alpha_i)+\tau(\beta_i)=\frac{1}{10}$ by applying (h3). 
  It also means that we must have
$\Delta(P) = \frac{i}{1000n}$ for some $i \leq n$.

  \item Assume that $i$ corresponds to the test of counter $C$ and possible decrement.  We now show that
$\tau(\beta_i)>0$ implies $\Delta(C) = \frac{1}{1000}$.
Assume that $x_i=\tau(\beta_i)>0$.
We have $\Delta_2(CZ)=x_i \frac{1}{2} \Delta(C)$
and 
$\Delta_2(CB)=x_i \frac{1}{2000}$.
By (h5), we have that $\Delta_2(CZ)=\Delta_2(CB)$,
and hence $\Delta{C} = \frac{1}{1000}$ (which encodes $c=0$).

\item In the same way, $\tau(\alpha_i)>0$ implies 
$\Delta(C) \leq \frac{1}{2000}$.
Assume that $x_i=\tau(\alpha_i)>0$.
We have $\Delta_2(CY)=x_i \Delta(C)$
and 
$\Delta_2(CA)=x_i \frac{1}{2000}$.
By (h5), we have that $\Delta_2(CY) \leq \Delta_2(CA)$,
and hence $\Delta(C) \leq \frac{1}{2000}$ (which encodes $c \geq 1$).
Hence exactly only of $\tau(\alpha_i)=\frac{1}{10} $ or $\tau(\beta_i)=\frac{1}{10}$  is true, the
one corresponding to the correct answer to the zero test.
In particular, we can observe that
$\Delta_2(P) = \frac{j}{1000n}$
for the correct next value $j \leq n$ of the program counter.

\item Notice that $\frac{\Delta_2(C)}{\Delta(C)} \in \{\frac{1}{2},1,2\}$.
Indeed, $\frac{\Delta_2(CX)+\Delta_2(CY)+\Delta_2(CZ)} {\Delta_2(C)} 
\in \{\frac{1}{40},\frac{1}{20},\frac{1}{10}\}$.
Because of (h6), $\Delta_2(XC) \in  \{\frac{1}{40},\frac{1}{20},\frac{1}{10}\}$.
Now, we have 
$\Delta_2(XC) = \frac{\tau(\delta_C)}{40}$
and 
$\Delta_2(C) = \frac{\tau(\delta_C)}{2}$.
That is, 
$\Delta_2(C)=20 \Delta_2(XC)$, that is 
$\frac{\Delta_2(C)}{\Delta(C)} \in \{\frac{1}{2},1,2\}$.
We can check that in every increment, do not touch (in particular for $D$) and decrement, the right action is performed.

\item In particular, if $\Delta(C)=\frac{1}{1000 \cdot 2^i}$ for some $i$, then $\Delta_2(C)=\frac{1}{1000 \cdot 2^{j}}$ for some $j$.

\end{itemize}

We now prove the following:

\begin{claim}
\noindent Let $s$ be a configuration of CM
and $\Delta_s \in H$ a configuration of $\cA$ encoding $s$.
CM does not halt from $s$ iff there exists a strategy $\sigma$  which is $H$-safe from $\Delta_s$.
\end{claim}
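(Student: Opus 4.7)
The plan is to prove both implications by leveraging the step-by-step analysis already established (the bullet points preceding the claim), which essentially shows that if $\Delta \in H$ encodes some configuration of CM with program counter $pc=i$, then any one-step strategy $\tau$ with $\Delta \cdot M_\tau \in H$ is forced to play $\alpha_i$ or $\beta_i$ (according to whether the zero-test succeeds or not) with weight exactly $\frac{1}{10}$, and the resulting distribution $\Delta \cdot M_\tau$ is itself the encoding of the unique CM-successor of the current configuration. Thus a one-step of the PFA in $H$ exactly simulates a one-step of CM.

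For the $(\Rightarrow)$ direction, assume CM does not halt from $s$, so there is an infinite CM-computation $s = s_0 \to s_1 \to s_2 \to \cdots$. I would build the strategy $\sigma = \tau_1 \tau_2 \cdots$ inductively: if the $m$-th instruction is $pc=i$ performing an increment or a zero-test-and-decrement that outputs value $v \in \{0, \geq 1\}$, define $\tau_m$ by setting $\tau_m(\alpha_i) = \frac{1}{10}$ (resp.\ $\tau_m(\beta_i) = \frac{1}{10}$) depending on which branch is taken, the appropriate weights on $\delta_C$, $\delta_D$, $\delta$ forced by the paper's calculations (so that $S$ remains at $\frac{1}{10}$ and the counter masses update correctly), and fill the remainder with $\iota$. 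A direct verification using the earlier case analysis of actions $\alpha_i$, $\beta_i$, $\delta$, $\delta_C$, $\delta_D$ shows that the resulting distribution lies in $H$ and encodes $s_{m+1}$. Induction then produces an infinite $H$-safe trajectory.

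For the $(\Leftarrow)$ direction, assume there is an $H$-safe strategy $\sigma = \tau_1 \tau_2 \cdots$ from $\Delta_s$. Let $\Delta_m = \Delta_s \cdot M_{\tau_1} \cdots M_{\tau_m} \in H$. I will argue by induction that $\Delta_m$ encodes a well-formed CM configuration $s_m$ and that the transition $s_m \to s_{m+1}$ is a valid CM step. The base case is the hypothesis. For the inductive step I invoke the preceding bullet-by-bullet analysis: since $\Delta_m(P) = \frac{i}{1000n}$ for the current $pc=i$, only $\alpha_i, \beta_i$ can carry positive weight (and they must sum to $\frac{1}{10}$ by (h3)); the strict test (h5) forces the correct branch of the zero-test; and the constraint (h6) together with the choice of $\delta_C, \delta_D$ forces the counter masses in $\Delta_{m+1}$ to be exactly the encoded successor values. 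Because $\Delta(C), \Delta(D) > 0$ by the strict part of (h2), the counters never degenerate to the ``infinite'' state, so the simulation remains faithful. Consequently $\sigma$ induces an infinite sequence of valid CM steps from $s$, so CM does not halt from $s$.

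The main obstacle I anticipate is the reverse direction, in particular verifying that the analysis forces the strategy to simulate a \emph{halting-free} computation rather than permitting $\sigma$ to ``cheat'' once CM would terminate. The key point is that if CM were to reach a configuration with no defined instruction (e.g.\ a halt state outside $\{1,\dots,n\}$), then $\Delta_m(P)$ would fail to equal any $\frac{i}{1000n}$ compatible with some action $\alpha_i$ or $\beta_i$, so the constraint $\tau(\alpha_i)+\tau(\beta_i)=\frac{1}{10}$ coming from (h3) cannot be satisfied while staying in $H$, contradicting $H$-safety. A careful bookkeeping of these forced equalities, together with the observation that (h2)'s strict bound $\Delta(C),\Delta(D)>0$ precludes the spurious ``counter $=\infty$'' fixed point, is what ties everything together.
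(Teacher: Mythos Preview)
Your proposal is correct and follows essentially the same approach as the paper: both directions are handled by induction, using the preceding bullet-point analysis to show that any one-step strategy staying in $H$ is uniquely determined and faithfully simulates a single CM step, so that $H$-safe strategies from $\Delta_s$ correspond exactly to infinite CM computations from $s$. Your write-up is in fact more detailed than the paper's own (rather terse) proof, and your observations about the halting case and the role of the strict inequality in (h2) are exactly the points the paper flags elsewhere.
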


\begin{proof}
Let $s_0=s$ be an initial configuration of CM
and $\Delta_0 \in H$ a configuration of $\cA$ encoding $s_0$.
Let $s_1, \ldots$ be the finite (CM halts from $s_0$) 
or infinite (CM does not halt from $s_0$) sequence of configurations explored from $s_0$ following the CM.

We define $\Delta_0 \in H$ a configuration encoding $s_0$, that is 
with $\Delta_0(C)=\frac{1}{1000 \cdot 2^i}$ for $c=i$ in $s_0$ (same for $D$ and $d$) and $\Delta_0(P)=\frac{i}{1000 n}$. 
Applying the above, there exists a one-step strategy
$\tau_0$ such that $\Delta_1 = \Delta_0 \cdot M_\tau \in H$,
and further, $\Delta_1$ is a configuration encoding $s_1$.
We can proceed trivially by induction,
unless $s_i$ has no successor.
In this case, 
there is no one-step strategy $\tau_i$
such that 
$\Delta_i \cdot M_{\tau_i} \in H$.
\end{proof}

Now, take $\Delta \in H$. If $\Delta(P) \neq \frac{i}{1000 n}$ for all $i \leq n$, then for any one-step strategy $\tau$, $\Delta \cdot M_\tau \notin H$ trivially.
Otherwise, 
$\Delta(P) = \frac{i}{1000 n}$ for some $i$.
If  $\Delta(C) \neq \frac{1}{1000 \cdot 2^j}$ for all $j \in \mathbb{N}$ and/or 
$\Delta(D) \neq \frac{1}{1000 \cdot 2^k}$ for all $k \in \mathbb{N}$, consider $j,k$ with 
$\Delta(C) \in (\frac{1}{1000 \cdot 2^{j+1}}, \frac{1}{1000 \cdot 2^j}]$
and
$\Delta(D) \in (\frac{1}{1000 \cdot 2^{k+1}}, \frac{j}{1000 \cdot 2^k}]$.
Let $s$ the initial configuration with $pc=i$, $c=j$ and $d=k$.
We say that $D$ weakly encodes $s$. Then $\Delta$ behaves like $\Delta_s$ (playing the same strategy reaching states weakly encoding the same configurations), 
except if at some point,  $\tau_1, \cdots \tau_\ell$ have been played,   the pc encoded by $\Delta \cdot M_{\tau_1} \cdots M_\tau{\ell} \in H$ is a zero test, and $\Delta \cdot M_{\tau_1} \cdots M_\tau{\ell} \in (\frac{1}{2000},\frac{1}{1000})$, in which case there is no further  one-step strategy $\tau_{\ell+1}$ which can be played while staying in $H$. We thus have proved the following:

\begin{claim}
There exists a $H$-safe strategy $\sigma$ from some distribution of $H$
iff there exists a $H$-safe strategy $\sigma$ from 
$\Delta_s \in H$ for some configuration $s$ of CM.
\end{claim}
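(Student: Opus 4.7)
The plan is to prove the two directions of the equivalence separately. The forward direction, from a configuration $s$ of CM to some distribution of $H$, is immediate since $\Delta_s \in H$ by construction. So the effort is entirely on the reverse direction.

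For the reverse direction, suppose an $H$-safe strategy $\sigma$ exists from some $\Delta \in H$. The goal is to construct a CM configuration $s$ such that $\Delta_s$ also admits an $H$-safe strategy. I would first apply the forced-move analysis developed in the paragraphs preceding the claim: since $\sigma$ keeps the distribution in $H$ for at least one step, $\Delta(P) = i/(1000n)$ for some $1 \le i \le n$. For the counters, pass to the weak encoding by picking the unique $j, k \in \mathbb{N}$ such that $\Delta(C) \in (1/(1000\cdot 2^{j+1}), 1/(1000\cdot 2^j)]$ and $\Delta(D) \in (1/(1000\cdot 2^{k+1}), 1/(1000\cdot 2^k)]$, and define $s$ to be the CM configuration with $pc=i$, $c=j$, $d=k$.

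Next I would show that this weak encoding is preserved along the trajectory produced by $\sigma$. A short arithmetic check on the $\delta_C$ and $\delta_D$ actions confirms that a decrement doubles $\Delta(C)$, an increment halves it, and an instruction touching only the other counter leaves $\Delta(C)$ unchanged, so brackets are mapped to brackets correctly. Consequently, the sequence of weakly-encoded CM configurations along the $\sigma$-trajectory coincides with the CM trajectory from $s$. The only obstruction to this correspondence is the zero-test exception identified just before the claim: if at some step the instruction is a zero test on $c$ and the current $C$-mass lies strictly in $(1/2000, 1/1000)$, then neither $\alpha_i$ nor $\beta_i$ keeps the distribution in $H$. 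Since $\sigma$ is $H$-safe by hypothesis, this degenerate case never arises, and hence CM does not halt from $s$.

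Finally, I would invoke the first claim of this proof, which states that non-halting of CM from $s$ implies the existence of an $H$-safe strategy from $\Delta_s$, completing the reverse direction. The main obstacle is cleanly justifying preservation of the weak-encoding invariant under $\sigma$; this reduces to the forced-move analysis already performed, which fixes the weights $\tau(\alpha_i), \tau(\beta_i), \tau(\delta_C), \tau(\delta_D)$ purely in terms of the qualitative configuration (program counter value and zero-ness of each counter), independently of the exact counter masses inside their brackets. Hence the bracket arithmetic carries through mechanically, leaving the zero-test interval $(1/2000, 1/1000)$ as the only possible point of failure, which safety of $\sigma$ precisely rules out.
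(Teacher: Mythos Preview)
Your proposal is correct and follows the same approach as the paper: define $s$ via the weak encoding of $\Delta$, observe that the forced-move analysis makes the PFA trajectory from $\Delta$ track the CM trajectory from $s$ (with the zero-test interval $(1/2000,1/1000)$ as the sole obstruction), deduce non-halting of CM from $s$ using safety of $\sigma$, and then invoke the first claim. One minor correction: the weights $\tau(\delta_C),\tau(\delta_D)$ are \emph{not} independent of the exact counter masses (e.g.\ $\tau(\delta_C)=10\,\Delta(C)$ for an increment); what is determined purely by the qualitative configuration is the ratio $\Delta_2(C)/\Delta(C)\in\{1/2,1,2\}$, and that is precisely what you need for bracket preservation.
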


With this we can conclude that  $H$ is existentially safe for $\cA$ iff  CM does not halt from some configuration. This concludes the proof of Theorem~\ref{thm:undec}.

\bigskip

 \subsection*{Proof of Proposition 5.5.}

\begin{itemize}
  \item[(i)] $H$ is universally safe 
  \item[(ii)] for all valuations $v$, there exists a $\tau$ such that $\Delta_v \cdot M_\tau \in H$
  \item[(iii)] the 3CNF formula $\Phi$ is uniformly false. 
\end{itemize}

Remember that (i) implies (ii) is trivial.

We first show that (ii) implies (iii).
Assume that for all valuation $v$, 
there exists a one-step strategy  $\tau$ with 
$\Delta_2 = \Delta_v \cdot M_\tau \in H$.

Let $v$ a valuation. We want to show that there exists a clause $c_j$
such that $c_j$ is false under $v$, that is all its literals are false.
Because for all $i$, $\Delta_2(Y_i)+\Delta_2(Z_i)=\frac{1}{400m}$,
for all $i$, there is some $j_i$ such that at least one of $\tau(\alpha_i^{j_i},S)>0$ or $\tau(\beta_i^{j_i},S)>0$. 
This is because only these actions add mass to $Y_i$ and $Z_i$ respectively, and only from state $S$.

Now, because for all $j \leq k$ and all $i \neq i' \leq n$, $\Delta_2(C^j_i)=\Delta_2(C^j_{i'})$, which implies that $\tau(\alpha^j_{i},S) + \tau(\beta^j_{i},S) = \tau(\alpha^j_{i'},S) + \tau(\beta^j_{i'},S)$. 
So we know that we can choose $j$ uniform in $i$, that is for all $i$, $j_i=j$. Assume without loss of generality that $j=1$.

At least one of $\tau(\alpha_i^{1},S)>0$ or $\tau(\beta_i^{1},S)>0$.
Now, $G^1_{\ell} = 0$ for all $\ell = 1, 2, 3$. 
We show now that all literals of $c_1$ are false under $v$.
Assume by contradiction that it is not the case.
Wlog, we can assume that the first literal of $c_1$ is
true under $v$. 
Case 1: the first literal of $c_1$ is $\neq x_i$. 
As $G^1_{\ell} = 0$, it means that $\tau(\beta_i^{1},S) =0$
and thus $\tau(\alpha_i^{1},S)>0$.
In particular, $\Delta_2(Y_i)>0$. Because of (Hvii),
we have $\Delta_2(X_i)>0$. By construction, 
$\Delta_v(X_i) = \Delta_2(X_i)>0$.
Now, by definition of $\Delta_v$, 
as $\Delta_v(X_i)  \neq 0$, it is that 
$\Delta_v(X_i) = \frac{1}{10}$ and
$x_i$ is true under $v$. A contradiction with the first literal 
of $c_j$ is true under $v$.

The other case is simpler:
Case 2: 
the first literal of $c_1$ is $x_i$. 
As $G^1_{\ell} = 0$, it means that $\tau(\alpha_i^{1},S) =0$
Thus 
$\Delta_2(X_i)=0$. By construction, 
$\Delta_v(X_i) = \Delta_2(X_i)=0$.
Now, by definition of $\Delta_v$, 
$x_i$ is false under $v$.
A contradiction with the first literal
of $c_j$ is true under $v$.

\bigskip

We now show that (iii) implies (i).
Assume that for all valuations, $\Phi$ is false.
We want to show that $H$ is universally-safe.
Let $\Delta_1 \in H$.
We show that there is a one-step strategy $\tau$ such that 
$\Delta_2 = \Delta_1 \cdot M_\tau \in H$.

Consider the valuation $v$ such that for all $i$, 
the variable $x_i$ is set to true if 
$\Delta_1(X_i)\leq \frac{1}{20m}$, 
and
$x_i$ false if 
$\Delta_1(X_i)> \frac{1}{20m}$.
As $\phi$ is false, there is a clause $c_j$ 
which is false under that valuation $v$ (all literals of $c_j$
are false under $v$).

We fix $\tau$ uniform over the states.
In particular, this will be both an MDP strategy and also a PFA strategy.
It plays:
 \begin{itemize}
  \item  $\tau(\alpha_i^{j'})=\beta_i^{j'}=0$ for all $j' \neq j$, 
   \item $\tau(\alpha_i^j)=\frac{1}{20m},\beta_i^j=0$
 for all $i$ such that variable $x_i$ is true in $v$,
   \item $\tau(\beta_i^j)=\frac{1}{20m},\alpha_i^j=0$
 for all $i$ such that variable $x_i$ is false in $v$,
  \item $\tau(\delta),\tau(\iota)$ will be fixed later.
\end{itemize}

Let $\Delta_2 = \Delta_1 \cdot M_\tau$.
We can check that:
\begin{itemize}
\item $\Delta_2(X_i)=\Delta_1(X_i)$ for all $i$,
\item $\Delta_2(Y_i)=\frac{1}{400m}$ for $x_i$ true in $v$ and $0$ otherwise,
\item $\Delta_2(Z_i)=\frac{1}{400m}$ for $x_i$ false in $v$ and $0$ otherwise,
\item $\Delta_2(C_i^{j'})=0$  for all $i$ and $j' \neq j$ and
$\Delta_2(C_i^{j})=\frac{1}{4000m^2}$  for all $i$,
\item $\Delta_2(G_\ell^{j'})=0$  for all $\ell$ and $j'$
\end{itemize}

That is, all the requirements for $H$ are satisfied, but possibly for $\Delta_2(S)=\frac{1}{10}$. It suffices to set $\tau(\delta)$ at the right weight to ensure it, and to set $\tau(\iota)$ to the rest of the weight. We obtain $\Delta_2 \in H$. Hence $H$ is universally-safe.

\end{document}